\documentclass{article}

\IfFileExists{iclr2027_conference.sty}{%
  \usepackage{iclr2027_conference}%
}{%
  \usepackage[margin=1in]{geometry}%
  \newcommand{\iclrfinalcopy}{}%
}

\usepackage{amsmath,amssymb,amsthm}
\usepackage{booktabs}
\usepackage{graphicx}
\usepackage{float}
\usepackage{xcolor}
\usepackage{multirow}
\usepackage[hidelinks]{hyperref}
\usepackage{microtype}
\usepackage{enumitem}
\usepackage{array}       % >{...} column specifiers in the appendix tables
\usepackage{algorithm}   % Algorithm 1 (the selection protocol)
\usepackage{algorithmic}

\newtheorem{proposition}{Proposition}

\newcommand{\onev}{\mathbf{1}}

\newcommand{\funcset}{\mathcal{T}}

\newcommand{\norm}[1]{\lVert #1 \rVert}
\newcommand{\abs}[1]{\lvert #1 \rvert}

\title{Contextual Utility of Quantization Moves\\in Extreme Low-Bit LLMs}

\ifdefined\anonymous
  \author{Anonymous authors\\Paper under double-blind review}
\else
  \author{%
  Wenxuan Xiao\thanks{Equal contribution.}\\
  Astrmira Tech.\\
  \texttt{w.xiao@astrmira.com}
  \And
  Xu Cao\footnotemark[1]\\
  Astrmira Tech.\\
  \texttt{x.cao@astrmira.com}%
  }
  \iclrfinalcopy
\fi
\begin{document}
\maketitle

% The ICLR style sets the running head inside \maketitle, so any override
% must come after it.  In the named preprint build \iclrfinalcopy would
% otherwise claim acceptance, so we state what the document actually is.
% The anonymous build keeps the style's own "Under review" head.
\ifdefined\anonymous\else
  \lhead{Preprint. Under review.}
\fi

\begin{abstract}
Post-training quantizers accept a finite change of codes when it lowers a
reconstruction proxy, and loss-aware variants score the change by the loss
gradient at the current weights.  We show that the utility of such a move to
the deployed model has to be read in the state context the move traverses, and
that this context has two parts.  The first is the move's own displacement: the
gradient at the move's midpoint integrates the curvature along the move that
the current-state gradient omits.  On 97 two-bit moves frozen before any
endpoint was computed, 98 standard-GPTQ moves on Llama-3.2-1B and 392
replacements of an activation-ordered GPTQ solution on Llama-3.2-3B, it predicts
the direction of the loss change on 96, 97 and 390 moves where the current-state
gradient predicts 68, 85 and 153.  The second is the other moves: on exhaustive
lattices of legal states the utility of a set of moves is a nearly quadratic
pseudo-Boolean function whose small pairwise term, $0.01$--$13\%$ of the energy,
decides the Pareto front, is exactly what a common-center search omits, and
differs between functionals, which disagree in direction on 46 of the 98 GPTQ
moves and 183 of the 392.  A reconstruction-optimal code re-selection on
Qwen3-4B improves the proxy in every group and leaves 2 of 255 lattice states
improving both declared functionals, against 197 for the preceding level refit;
a current-state linearization orders its states correctly on $2.75\%$, the
midpoint on $99.6\%$, and the solver's output loses $6.9$ zero-shot HellaSwag
points.  Reading each move at its own midpoint and choosing additively repairs
that step, gaining $2.7$--$2.9$ accuracy points on three tasks never used in
selection, and improves GPTQ without activation ordering on Llama-3.2-1B by
$11\%$ in unread test perplexity.  At larger support the additive read stops:
an exact-endpoint beam at the same budget finds a two-projection change that
improves both functionals ($-19\%$ perplexity) and dominates the 61-change
state; on the activation-ordered solution the 37-change one-shot state raises
the loss where the beam's two-projection state does not; and re-pricing the
moves from the state reached after 16 changes reverses the ARC sign of 121 of
208.
Utility is contextual at the granularity of a few moves, and composition must
judge at exact endpoints from the state reached.
\end{abstract}

\section{Introduction}
\label{sec:intro}

A post-training quantization solver accepts a move between low-bit hard states
because the move lowers a reconstruction proxy.  The deployment judges the same
move by next-token loss on the domains it serves, by agreement with a stronger
teacher, or by the option geometry of the tasks it answers, and the two
judgements come apart at every scale we have measured.  A free two-bit codebook
recovered on Qwen3-4B enters the NVFP4 grid at a cost of $0.84\%$ in perplexity
when its assignments are kept and only the codebook is snapped, and at $72.8\%$
when a pipeline first re-runs Lloyd quantization, changing $9.74$M assignments,
and then applies the same snap.  On the same model a standard solver step,
re-selecting every group's codes by exhaustive reconstruction-optimal
assignment, improves the proxy in every group and destroys the function
(Section~\ref{sec:bridge}).  Whether a move preserves the function is decided
by its context, in two senses, and both are structured enough to build on.

The first context is the move's own displacement.  A finite move carries
curvature that the gradient at the current state, where loss-aware criteria
evaluate candidates, cannot see; the gradient at the move's midpoint integrates
that curvature exactly for quadratics (Proposition~\ref{prop:midpoint}) and, on
the moves solvers propose, almost exactly in practice.  On 97 legal same-rate
moves proposed by solvers on 0.6B and 4B models, the midpoint gradient predicts
the sign of the loss change on 96 and orders the moves with Spearman $0.999$.
The current-state gradient predicts 68, and the reconstruction proxy, which
accepts every proposed move, is right on the 54 that improve.  Along one 4B
solver step 22 of 26 proxy-improving batches raise the loss and the midpoint
orients every one.  The ordering transfers to standard GPTQ.  On Llama-3.2-1B,
10 of 75 moves that improve even the full layer reconstruction raise held-out
loss and 35 worsen ARC option-KL, and the midpoint predicts $97/98$ and $96/98$
of the signs where the current-state gradient predicts $85$ and $86$; on
Llama-3.2-3B with activation ordering, $390/392$ and $391/392$ against $153$
and $357$.

The second context is the other moves.  On exhaustive $256$-state lattices of
legal moves scored under three functionals, the utility of a set of moves is
close to a quadratic pseudo-Boolean function ($R^2$ $0.97$--$1.00$) whose linear
coefficients are the moves' own-midpoint effects and whose pairwise
coefficients are mixed curvatures (Proposition~\ref{prop:quadratic}).  The
pairwise term carries $0.01$--$13\%$ of the energy and draws the Pareto front:
an additive model recovers $37\%$ of the exact front and a quadratic
$87$--$90\%$, and six of eight moves improve ARC option-KL in one context and
worsen it in another, which no monotone transform of an additive score admits.
The same term is what a common-center search omits, with all nine drift
intervals on a 1.7B bank excluding zero, and each functional has its own:
held-out NLL and ARC option-KL disagree in direction on 46 of the 98 GPTQ moves
and on 183 of the 392.

The 4B solver step is these two structures at full scale.  On exhaustive
lattices of its moves, 2 of 255 states improve both declared functionals
against 197 for the level-refit step that precedes it in the same solver, and a
current-state linearization orders the re-selection states correctly on
$2.75\%$ of them where the midpoint does on $99.6\%$: the solver composes
millions of proxy-improving moves judged at one stale center in one shot.  The
same structures say what a constructor has to evaluate.  One backward at each
move's own midpoint and an additive choice repair that step: the selected state
improves both functionals over the solver's output on untouched test blocks,
matches two 199-endpoint searches at half their cost, and gains $2.7$--$2.9$
zero-shot accuracy points on three tasks never used in selection, where the
solver's output loses $6.9$ on HellaSwag.  On Llama-3.2-1B the same rule gives
a state $11\%$ lower in unread test perplexity than GPTQ without activation
ordering.  There the additive read stops.  At support 61 of 112 the pairwise
term is no longer small: an exact-endpoint beam at the same budget finds a
two-projection change that improves both functionals and dominates the one-shot
state; on the activation-ordered solution the 37-change one-shot state raises
the loss where the beam's two-projection state does not; and re-pricing the
moves from the state reached after 16 changes reverses more than half of the
ARC prices (Section~\ref{sec:constructors}).

\paragraph{Contributions.}
(i) A prospective measurement of where a move's utility is read, with the
curvature mechanism behind it (Sections~\ref{sec:setup},~\ref{sec:within}).
(ii) Exhaustive lattice evidence that the utility of a set of same-rate moves is
nearly quadratic, with the pairwise term deciding the Pareto front and its
omission the measured drift of common-center search
(Section~\ref{sec:context}).  (iii) Exhaustive lattices on two consecutive steps
of a 4B solver: a reconstruction-optimal code re-selection destroys the function
through a narrow jointly improving region and current-state misordering
(Section~\ref{sec:bridge}).  (iv) Constructors: one-shot midpoint selection
repairs that step with zero-shot accuracy gains and improves GPTQ without
activation ordering by $11\%$ perplexity on unread units; at larger support an
exact-endpoint beam at matched budget beats it on both GPTQ solutions, and
re-pricing from the reached state measures the price reversals that set the
boundary (Section~\ref{sec:constructors}).

\section{Legal Moves, Functionals, and Three Places to Measure}
\label{sec:setup}

\paragraph{Hard states and legal moves.}
A low-bit \emph{hard state} $q$ assigns every weight group a codebook entry and
per-group magnitudes; every state we compare is a legal state of the same code
at identical bitrate, verified by decoding from packed form.  Our primary format
is a two-bit code (sign bit, strong/weak bit, two magnitudes per group of 128);
Section~\ref{sec:within} adds asymmetric INT3 with one scale and zero point per
group of 128 as produced by standard GPTQ.  A solver proposes a \emph{move} $d=q_1-q_0$ between two
such states because it improves a local reconstruction proxy $P$; along a
recovery trajectory or between two solvers the full proposal is large, so we
split it into $n$ disjoint \emph{primitive} moves $d=\sum_{i=1}^n d_i$ on
non-overlapping groups.  Any subset $x\in\{0,1\}^n$ then yields a legal state
$q_0+Dx$ with $D=[d_1\cdots d_n]$, so the bank spans an exact Boolean lattice
of $2^n$ hard states.  Banks are frozen before any functional is measured.

\paragraph{Declared functionals.}
A deployment cares about functionals $F_j$, $j\in\funcset$: next-token NLL on a
domain, full-vocabulary KL to a floating-point master, restricted option-KL or
gold margin on a multiple-choice task.  Each is an average over statistical
units (2048-token blocks or item microbatches) whose per-unit values we keep for
paired intervals.  Write
\begin{equation}
  f_j(x)=F_j(q_0+Dx)-F_j(q_0),\qquad x\in\{0,1\}^n ,
  \label{eq:setfunction}
\end{equation}
for the exact utility of a subset of moves under functional $j$.

\paragraph{Three places to measure a move.}
For a single move $d$ from $q_0$, three first-order quantities cost one gradient
each: $d^\top\nabla F(q_0)$ at the \emph{current state}, which is what
loss-aware criteria use; $d^\top\nabla F(q_0+\tfrac12d)$ at the move's
\emph{own midpoint}; and $d_i^\top\nabla F(q_0+\tfrac12d)$ at the \emph{common
center} of the full proposal, which one backward returns for every primitive at
once.

\begin{proposition}[Midpoint identity]
\label{prop:midpoint}
If $\nabla^2F$ is $L_H$-Lipschitz on $[q_0,q_1]$ and $m=(q_0+q_1)/2$, then
$F(q_1)-F(q_0)=d^\top\nabla F(m)+r$ with $\abs r\le L_H\norm d^3/24$; the
current-state expansion has remainder $\tfrac12 d^\top\nabla^2F(\xi)\,d$ for
some $\xi$ on the segment, of order $\norm d^2$.
\end{proposition}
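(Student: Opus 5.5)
The plan is to parametrize the segment from the midpoint and compare a symmetric Taylor expansion with the exact increment. Set $g(t)=F(m+t d)$ for $t\in[-\tfrac12,\tfrac12]$, so that $F(q_1)-F(q_0)=g(\tfrac12)-g(-\tfrac12)$ and $g'(0)=d^\top\nabla F(m)$. Then $g''(t)=d^\top\nabla^2F(m+td)\,d$, and the Lipschitz hypothesis on $\nabla^2F$ along the segment gives
\[
\abs{g''(t)-g''(s)}\le L_H\norm d^3\abs{t-s}.
\]
So $g''$ is Lipschitz with constant $L_H\norm d^3$, and the whole statement reduces to a one-dimensional estimate for the midpoint rule.

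For that estimate I would write $g(\pm\tfrac12)$ by Taylor's formula with integral remainder at order two. The remainder is
\[
g(h)=g(0)+hg'(0)+\int_0^h (h-s)\,g''(s)\,ds,\qquad h=\pm\tfrac12 .
\]
Subtracting the two expansions cancels $g(0)$ and leaves $g'(0)$ plus
\[
r=\int_0^{1/2}(\tfrac12-s)\bigl(g''(s)-g''(-s)\bigr)\,ds .
\]
This step is the crux. The substitution $s\mapsto -s$ in the backward integral pairs $g''(s)$ against $g''(-s)$, so the constant part of $g''$, which is the quadratic term, cancels exactly. This cancellation is also why the midpoint is exact for quadratics. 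Bounding $\abs{g''(s)-g''(-s)}\le 2sL_H\norm d^3$ then gives
\[
\abs r\le 2L_H\norm d^3\int_0^{1/2}(\tfrac12-s)\,s\,ds .
\]
The integral equals $\tfrac1{48}$, so $\abs r\le L_H\norm d^3/24$, which is the stated constant. The main obstacle is only keeping the signs straight in the backward expansion so that the pairing genuinely cancels; it is worth checking the constant against $g(t)=t^3$.

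For the current-state claim, I would apply the mean-value (Lagrange) form of Taylor's theorem to $\phi(t)=F(q_0+td)$ on $[0,1]$. This needs only continuity of $\nabla^2F$ on the segment, which the Lipschitz assumption implies. It gives
\[
F(q_1)-F(q_0)=d^\top\nabla F(q_0)+\tfrac12 d^\top\nabla^2F(\xi)\,d
\]
for some $\xi$ on the segment. That remainder is bounded by $\tfrac12\sup\norm{\nabla^2F}\,\norm d^2$ and is generically of exact order $\norm d^2$, as a nondegenerate quadratic $F$ shows. This contrasts with the cubic midpoint remainder and completes the statement.
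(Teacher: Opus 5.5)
Your proof is correct and is essentially the paper's argument: a symmetric Taylor expansion with integral remainder about $m$, in which the even-order terms cancel, followed by the Lagrange remainder for the current-state expansion. The only cosmetic difference is that you restrict to $g(t)=F(m+td)$ and pair $g''(s)$ against $g''(-s)$ directly, whereas the paper expands $F(m\pm u)$ with $u=d/2$ to second order, bounds each remainder against $\nabla^2F(m)$ by $L_H\norm{u}^3/6$, and then adds them, which gives the same constant $L_H\norm{d}^3/24$.
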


The midpoint expansion is exact for quadratics because the even terms cancel;
the current-state one is not, and on the moves solvers propose the omitted
curvature decides the sign about a third of the time (Section~\ref{sec:within}).

\paragraph{The lattice as a quadratic pseudo-Boolean function.}
Every set function on $\{0,1\}^n$ has a unique Möbius expansion
\begin{equation}
  f(x)=\sum_i s_i\,x_i+\sum_{i<k}\beta_{ik}\,x_ix_k+R_{\ge3}(x),
  \qquad
  s_i=f(e_i),\quad
  \beta_{ik}=f(e_i{+}e_k)-f(e_i)-f(e_k) ,
  \label{eq:moebius}
\end{equation}
with $f(0)=0$ and $R_{\ge3}$ collecting degree-three and higher terms;
Section~\ref{sec:context} measures how much of $f$ lives in $R_{\ge3}$.

\begin{proposition}[Coefficients under constant curvature]
\label{prop:quadratic}
If $\nabla^2F\equiv H$ on the lattice's convex hull, then $R_{\ge3}\equiv0$,
\begin{equation}
  s_i=d_i^\top\nabla F(q_0+\tfrac12 d_i)
     =d_i^\top\nabla F(q_0)+\tfrac12 d_i^\top Hd_i ,
  \qquad
  \beta_{ik}=d_i^\top Hd_k ,
  \label{eq:coeffs}
\end{equation}
and the common-center response of primitive $i$ at the full-proposal midpoint
$m=q_0+\tfrac12 D\onev$ is $v_i=d_i^\top\nabla F(m)=s_i+\tfrac12\sum_{k\ne i}\beta_{ik}$.
Consequently, for any subset $x$,
\begin{equation}
  \sum_i v_i x_i-f(x)=\tfrac12\,x^\top\beta\,(\onev-x),
  \label{eq:drift}
\end{equation}
where $\beta$ is the symmetric matrix with entries $\beta_{ik}$ and zero
diagonal.
\end{proposition}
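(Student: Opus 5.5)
Under the hypothesis $\nabla^2F\equiv H$ on the convex hull of the lattice, $F$ restricted to that hull is exactly a quadratic, so the plan is to write it out and read every claim off that closed form. Set $g=\nabla F(q_0)$. For any $x$ with $q_0+Dx$ in the hull, integrating the gradient along the segment gives
\begin{equation*}
f(x)=F(q_0+Dx)-F(q_0)=g^\top Dx+\tfrac12\,x^\top D^\top H D\,x ,
\end{equation*}
and the gradient at any point $q_0+y$ of the hull is $g+Hy$. Everything else comes from these two formulas.

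First, I expand the quadratic form in coordinates. This gives
\begin{equation*}
f(x)=\sum_i\bigl(g^\top d_i+\tfrac12 d_i^\top Hd_i\,x_i\bigr)x_i+\sum_{i<k}d_i^\top Hd_k\,x_ix_k .
\end{equation*}
On $\{0,1\}^n$ we have $x_i^2=x_i$, so the diagonal term is linear in $x_i$. The result is therefore a polynomial of degree at most two in the Boolean variables. By uniqueness of the Möbius expansion \eqref{eq:moebius}, it follows that $R_{\ge3}\equiv0$, $s_i=g^\top d_i+\tfrac12 d_i^\top Hd_i$ and $\beta_{ik}=d_i^\top Hd_k$. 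One can also check the coefficients directly from $s_i=f(e_i)$ and $\beta_{ik}=f(e_i{+}e_k)-f(e_i)-f(e_k)$. The other form of $s_i$ follows from $\nabla F(q_0+\tfrac12 d_i)=g+\tfrac12 Hd_i$; it is also the midpoint identity of Proposition~\ref{prop:midpoint} with $L_H=0$, so $r=0$. The only point needing care is that the segments $[q_0,q_0+d_i]$ and $[q_0,q_0+Dx]$, as well as the point $m$, lie in the convex hull. This holds because each is a convex combination of lattice vertices ($m$ is the average of $x=0$ and $x=\onev$).

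Next, for the common-center response I use $\nabla F(m)=g+\tfrac12 H D\onev$. Then
\begin{equation*}
v_i=d_i^\top g+\tfrac12\sum_k d_i^\top Hd_k=s_i+\tfrac12\sum_{k\ne i}\beta_{ik},
\end{equation*}
where the $k=i$ term of the sum combines with $d_i^\top g$ to form $s_i$. For the drift identity \eqref{eq:drift}, write $v=s+\tfrac12\beta\onev$ and $f(x)=s^\top x+\tfrac12 x^\top\beta x$, using $x_i^2=x_i$ and the zero diagonal of $\beta$. Subtracting gives
\begin{equation*}
v^\top x-f(x)=\tfrac12 x^\top\beta\onev-\tfrac12 x^\top\beta x=\tfrac12 x^\top\beta(\onev-x),
\end{equation*}
using the symmetry of $\beta$.

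No step is a real obstacle; the argument is bookkeeping. The place most likely to go wrong is the diagonal: the Boolean reduction $x_i^2=x_i$ is what moves $\tfrac12 d_i^\top Hd_i$ into $s_i$ rather than into $\beta$. I will therefore state explicitly that $\beta$ has zero diagonal and that the quadratic form is evaluated only on Boolean $x$.
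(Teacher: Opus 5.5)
Your proof is correct and follows the paper's argument step for step. You write $f(x)=x^\top D^\top\nabla F(q_0)+\tfrac12x^\top D^\top HDx$ exactly, and use $x_i^2=x_i$ to fold $\tfrac12d_i^\top Hd_i$ into $s_i$. You then read off $R_{\ge3}\equiv0$ and $\beta_{ik}=d_i^\top Hd_k$ by uniqueness of the Möbius expansion, and evaluate the affine gradient at $q_0+\tfrac12d_i$ and at $m$ to get both response formulas and the drift identity. Your explicit check that the segments and $m$ lie in the convex hull is a detail the paper leaves implicit.
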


Along a path of moves the identity telescopes: $F(q_T)-F(q_0)=\sum_t
d_t^\top\nabla F(m_t)+\sum_t r_t$ with each $r_t$ cubic in $\norm{d_t}$, and
under constant curvature the exact marginal of adding move $i$ to a selected
set $S$ is $s_i+\sum_{k\in S}\beta_{ki}$ (Proposition~\ref{prop:path},
Appendix~\ref{app:proofs}).  Pricing each move at its own midpoint from the
state actually reached therefore accumulates the pairwise term without
estimating $\beta$; Section~\ref{sec:constructors} measures how far that
carries.  Section~\ref{sec:within} reads the linear part of
\eqref{eq:moebius} off one gradient, Section~\ref{sec:context} measures $\beta$
and the front it sets, and Section~\ref{sec:drift} measures \eqref{eq:drift},
a functional-specific term because $H$ is.  Without
constant curvature the common-center minus own-midpoint response is
$\tfrac12\int_0^1 d_S^\top\nabla^2F(m-\tfrac t2 d_{\bar S})\,d_{\bar S}\,dt$
(Appendix~\ref{app:proofs}), the remaining gap to the exact endpoint being the
cubic remainder of Proposition~\ref{prop:midpoint}.

\section{A Move's Utility Is Read at Its Own Midpoint}
\label{sec:within}

We froze 97 solver-proposed moves before computing any endpoint and asked three
predictors to order them.  The moves come from four sources: 22 and 23 partial
transitions in two 0.6B contexts ($W_A$, $W_C$), and 26 dose batches along each
of the two consecutive 4B solver steps of Section~\ref{sec:bridge} (a level
refit; a code re-selection).  Each is a legal same-rate change of between one
and $11.6$M groups, dose batches doubling in size, and proxy-improving by
construction.  Predictors use calibration blocks and exact differences disjoint
held-out blocks; the functional is next-token NLL.

The midpoint response $d^\top\nabla F(q_0+\tfrac12d)$, estimated on this bank
by a central difference at the midpoint (two forwards; the banks below use one
hooked backward), recovers the exact sign on 96 of 97 moves with Spearman
$0.9994$.  The current-state gradient $d^\top\nabla F(q_0)$ recovers 68
(Spearman $0.65$), and the reconstruction proxy, which accepts every proposed
move, is right on the 54 that improve (Table~\ref{tab:predictor},
Figure~\ref{fig:scatter}, Appendix~\ref{app:bank}).  Along the re-selection
step 22 of 26 proxy-improving batches raise the loss, and 9 of 26 along the
level refit; the midpoint orients all 52 moves of the two steps, the
current-state gradient 12 and 20 of 26.

\paragraph{Standard GPTQ on Llama-3.2-1B.}
The same measurement on standard GPTQ (asymmetric INT3, groups of 128, block
128, sequential calibration, per-group scale and zero point, $1\%$ damping; our
implementation matches the reference to zero decoded-weight difference) gives
Table~\ref{tab:crossarch}.  The 98 legal moves, frozen before any functional was
read, are 75 layer-, family- and dose-batched interventions from RTN to the
GPTQ codes and 23 from the $1\%$ to a $0.1\%$-damping solution
(Appendix~\ref{app:bank}).  Two reconstruction proxies are recorded, GPTQ's
group-local objective and the full layer reconstruction with its cross-group
terms.  All 75 RTN$\to$GPTQ moves improve both, yet 10 raise held-out NLL (7
resolved), 35 worsen ARC option-KL (29 resolved), and the two functionals move
in opposite directions on 37.  The 23 low-damping moves improve the group
objective and worsen the full-layer one on all 23, and 17 raise held-out NLL:
the stronger proxy predicts that harm but, being one scalar, no functional
conflict.  On the calibration units the midpoint response recovers $97/98$ NLL
and $96/98$ ARC signs (Spearman $0.9996$ and $0.987$) and predicts both
directions jointly on $95/98$, against $76$ for the current-state gradient and
$78$ for a current-state Gauss--Newton estimate.  Its NLL integration error is
$30\times$ below the current-state gradient and $39\times$ below the
Gauss--Newton estimate ($10\times$ and $11\times$ for ARC).  Across the data
split every predictor loses: on held-out NLL the midpoint keeps $86/98$ with
Spearman $0.90$, the current-state estimates $82$ and $81$ with $0.63$--$0.65$.
The midpoint integrates the move's effect on the data it sees almost exactly;
the change of data no calibration-side predictor escapes.

Activation ordering changes the base model and not the result.  With act-order
and static groups the GPTQ($1\%$) model beats RTN on Llama-3.2-3B (NLL $-0.23$
$[-0.27,-0.17]$).  Over all 392 single-projection replacements of it by RTN or
by the $0.1\%$-damping codes, the midpoint recovers $390/392$ NLL and
$391/392$ ARC signs (Spearman $0.9996$ and $0.9993$), the current-state
gradient $153$ and $357$ (Spearman $0.29$ on NLL); on the 1B act-order bank the
counts are $223/224$ and $222/224$ against $50$ and $179$
(Appendix~\ref{app:protocols}).

\begin{table}[t]
\centering
\caption{Sign agreement with the exact endpoint for 98 frozen standard-GPTQ
moves on Llama-3.2-1B at INT3 g128 (Spearman in parentheses).  NLL on the
calibration units and across the held-out split; ARC-Easy option-KL on 592
label-free items; resolved: moves whose endpoint interval excludes zero (81
held-out NLL, 74 ARC); joint: both exact directions predicted.}
\label{tab:crossarch}
\vspace{0.3em}
\footnotesize
\setlength{\tabcolsep}{3.5pt}
\begin{tabular}{@{}lcccccc@{}}
\toprule
& NLL, same & \multicolumn{2}{c}{NLL, held-out} & \multicolumn{2}{c}{ARC option-KL} & joint \\
\cmidrule(lr){3-4}\cmidrule(lr){5-6}
Predictor & units & all 98 & resolved & all 98 & resolved & 98 \\
\midrule
group-local reconstruction & 70 (0.01) & 71 (0.08) & 58 & 49 (0.04) & 36 & --- \\
full-layer reconstruction & 83 (0.45) & 82 (0.47) & 71 & 54 (0.08) & 39 & --- \\
current-state gradient & 85 (0.70) & 82 (0.63) & 69 & 86 (0.83) & 69 & 76 \\
current-state Gauss--Newton & 84 (0.64) & 81 (0.65) & 69 & 89 (0.82) & 73 & 78 \\
midpoint & \textbf{97} (0.9996) & \textbf{86} (0.90) & \textbf{75} & \textbf{96} (0.987) & \textbf{74} & \textbf{95} \\
\bottomrule
\end{tabular}
\end{table}

One backward at a common center returns the responses of all $n$ primitives of
a bank at once; Section~\ref{sec:context} measures what that center omits.

\section{The Utility of a Set of Moves Is Nearly Quadratic, and the Small Pairwise Term Draws the Front}
\label{sec:context}

Section~\ref{sec:within} settles the linear coefficients of
\eqref{eq:moebius}; this section measures the rest of the expansion on two exact
Boolean lattices.

\subsection{Two exhaustive lattices}

From a 23-primitive 0.6B bank we selected two sublattices of eight primitives
each: a \emph{witness} lattice containing two pairs earlier runs had found to
compose plus four primitives chosen by manifest strata, and a \emph{stratified}
lattice chosen from manifest metadata alone.  All $256$ legal states of each
were scored under held-out next-token NLL ($64$ blocks), ARC-Easy option-KL to
the floating-point master ($592$ items) and HellaSwag option-KL ($400$ items),
with one backward at each of $14$ centers returning all eight responses per
statistical unit (Appendix~\ref{app:atlas}).

\begin{table}[t]
\centering
\caption{Two exhaustive $256$-state lattices, three functionals: Walsh energy by
degree (share of the non-constant energy), $R^2$ of the full quadratic fit,
participation rank of its $8\times8$ pairwise matrix $\beta$, and four-fold
cross-validated recall of the exact Pareto front.  The stratified lattice was
selected without any functional outcome.}
\label{tab:atlas}
\vspace{0.3em}
\footnotesize
\setlength{\tabcolsep}{4pt}
\begin{tabular}{@{}llrrrrrrr@{}}
\toprule
& & \multicolumn{2}{c}{Walsh energy} & & & \multicolumn{3}{c}{Pareto recall} \\
\cmidrule(lr){3-4}\cmidrule(lr){7-9}
Lattice & Functional & deg.\ 2 & deg.\ $\ge3$ & $R^2$ quad. & p-rank $\beta$
  & additive & quadratic & low-rank 4 \\
\midrule
witness    & D0 NLL   & 1.70\% & 0.07\% & 0.9992 & 2.91 & \multirow{3}{*}{54.5\%} & \multirow{3}{*}{63.6\%} & \multirow{3}{*}{72.7\%} \\
witness    & ARC KL   & 6.67\% & 1.86\% & 0.9802 & 3.47 & & & \\
witness    & Hella KL & 11.43\% & 2.85\% & 0.9688 & 2.66 & & & \\
\midrule
stratified & D0 NLL   & 1.20\% & 0.03\% & 0.9997 & 2.24 & \multirow{3}{*}{36.7\%} & \multirow{3}{*}{86.7\%} & \multirow{3}{*}{\textbf{90.0\%}} \\
stratified & ARC KL   & 7.01\% & 0.92\% & 0.9907 & 3.42 & & & \\
stratified & Hella KL & 12.74\% & 1.10\% & 0.9889 & 2.45 & & & \\
\bottomrule
\end{tabular}
\end{table}

Table~\ref{tab:atlas} gives the structure (Figure~\ref{fig:atlas}).  Degree-two
terms carry $1.2$--$1.7\%$ of the non-constant Walsh energy for D0 and
$6.7$--$12.7\%$ for the task functionals; degree-three-and-higher terms carry
$0.03$--$1.1\%$ on the stratified lattice and $0.07$--$2.9\%$ on the witness
lattice.  On the stratified lattice the
quadratic explains $98.9$--$99.97\%$ of the variance and cuts the additive
model's cross-validated RMSE $2.5$--$5.6\times$.
Four controls locate this structure (Table~\ref{tab:controls}).  Freeing the
additive model's intercept recovers $30\%$ of the stratified front but ties the
quadratic on the witness lattice ($63.6\%$), whose front separates the two under
no metric.  A monotone transform of an additive score $h(a^\top x)$, fitted by
isotonic regression with Isotron direction updates, recovers $45\%$ and $30\%$
of the two exact fronts against $73\%$ and $90\%$ for the rank-4 quadratic, and
declares $8$ and $11$ states safe that are not.  Reversals exclude the class
itself.  Under any monotone $h$ the same move cannot improve in one context and
worsen in another, yet on the stratified lattice 6 of 8 moves do so for ARC
option-KL and 4 of 8 for HellaSwag (simultaneous intervals), and $14$ and $11$
of the 28 pairs reverse their order; the
witness lattice gives $5$, $4$, $14$ and $12$.  The quadratic also
extrapolates.  Fitted only on the $93$ states of support at most three, it
recovers $92\%$ of the exact front among the $163$ states of support four to
eight on the stratified lattice with no false-safe state, against $52\%$ for the
additive model with or without intercept and $32\%$ for the single-index model,
and halves the ARC and HellaSwag RMSE on those states; on the witness lattice
the additive, affine and quadratic models all recover $80\%$.

The pairwise term is small in energy and decisive in geometry.  The stratified
lattice has $30$ exact Pareto states, of which the additive model recovers $11$
and the rank-4 quadratic $27$, and the gap persists under paired resampling
($36\%$ against $84$--$88\%$ mean recall).  Set recall does not order
single-state choice, however.  Under a fixed
utility (least worst standardized functional among states predicted to improve
all three) the full quadratic selects a state with regret $0.040$ against
$0.072$ for both additive models, while the rank-4 quadratic, best on
membership, selects worse ($0.086$).  Section~\ref{sec:constructors} tests
construction at exact endpoints.

\subsection{The omitted term, measured as drift}
\label{sec:drift}

Proposition~\ref{prop:quadratic} says a common-center linear score
$\sum_i v_ix_i$ overshoots the exact utility by $\tfrac12x^\top\beta(\onev-x)$,
half the interaction between the selected and the omitted moves.  On a 1.7B
model with a 25-primitive bank and three functionals (two NLL domains, a
CommonsenseQA option-KL target) we recomputed, for three frozen candidates on
the same units, the response at the full-proposal midpoint and at the
candidate's own midpoint.  All nine drift
intervals exclude zero and seven of nine reverse the sign of the prediction
(Table~\ref{tab:driftfull}).  For the support-4 candidate the same
selected--omitted interaction drifts D0 by $-0.301$ $[-0.374,-0.228]$ and the
target by $+0.073$ $[+0.066,+0.080]$: opposite signs that no scalar property of
the candidate produces and a matrix $\beta_j$ differing by functional does.

\subsection{The functional is part of the context}
\label{sec:functional}

The coefficients of \eqref{eq:moebius} carry a functional index, and the
orderings they induce differ by about half.  On the 98 standard-GPTQ moves of
Table~\ref{tab:crossarch}, held-out NLL and ARC-Easy option-KL move in opposite
directions on $46$, and on $31$ of the $62$ moves where both effects are
resolved.  The functional-specific midpoint responses classify conflict on
$95/98$ moves on the calibration units ($59/62$ resolved across the split),
where a single reconstruction direction is right on $52/98$.  On the act-order
banks the two functionals conflict on $81$ of $224$ replacements (1B) and $183$
of $392$ (3B), and the midpoint classifies conflict on $389/392$ where the
current-state gradient does on $164$.  On Qwen3-4B a re-selection batch of
$65{,}536$ groups that improved held-out NLL by $0.00259$ lowered the MMLU and
ARC-Challenge gold margins by $0.054$ and $0.065$, and a state certified on
HellaSwag improved teacher-KL, gold margin and option cross-entropy on $1{,}820$
items while changing no top-1 prediction (Appendix~\ref{app:structure}).  For
an undeclared functional the sign is unconstrained by the declared responses
(Appendix~\ref{app:unprotected}).

\subsection{What gradient probes recover}
\label{sec:probes}

With $r(z)=D^\top\nabla F(q_0+\tfrac12Dz)$, constant curvature gives
$2[r(z)-r(0)]=Bz$ with $B=D^\top HD$, so $\beta$ could be read from gradients
alone.  Not quite: the operator fitted to all thirteen non-zero centers
reproduces the sign of the exact $\beta$ on $64$--$93\%$ of pairs, and the
operator from the eight unit centers mispredicts the response at the full and
random centers on $31$--$39$ of $40$ components.  A function can agree with a
quadratic on every vertex and carry a different gradient field inside the cube
(any $x_i(1-x_i)h(x)$ vanishes at the vertices), so exhaustive endpoints recover
what gradient probes cannot.  As a screen the response quadratic still
reproduces $92$--$100\%$ of exact signs.

% Exp C (BRIDGE_LATTICE_4B_RESULT_20260903.md, analysis 7687213): two solver
% steps on the Batch-12A Q2 ladder of the grid-trained Qwen3-4B master
% (M_A_B = level refit with codes fixed; M_B_C = reconstruction-optimal code
% re-selection with free levels).
% Exp C' (EXP_C_PRIME_PROJECTION_LATTICE_RESULT_20260904.md, d138006): the
% plane-preserving grid snap of the classic free-codebook endpoint (A -> B_plane),
% with B_legacy = Lloyd re-quantization + snap as the identity control for the
% historical +72.8%.

\section{What Destroys the Function Is the Solver's Move, Not the Grid}
\label{sec:bridge}

The 4B moves of Section~\ref{sec:within} are two consecutive steps of one
solver acting on a Qwen3-4B master trained inside the grid: from the grid-legal
hard state it deploys, the solver first frees each group's two magnitudes with
codes fixed (the level-refit step), then re-selects each group's codes by
exhaustive magnitude-ranked search with free levels (the re-selection step);
both lower the reconstruction objective in every group they touch, $21.8$M
groups in the second.  A third move snaps the classic arm's free two-bit
codebook onto the NVFP4 grid with its code planes kept.  Each move was split
into 26 proxy-ranked dose batches, eight selected by
manifest strata alone, and all 256 states of each lattice scored under D2 NLL
and ARC-Challenge option-KL with 18 response centers (Appendix~\ref{app:bridge},
Table~\ref{tab:bridge} and Figure~\ref{fig:bridge}).

After the level refit, 197 of 255 states improve both functionals and the full
eight-batch state is safe ($-0.0042$ D2, $-0.068$ ARC option-KL).  After the
re-selection, two states do, a single batch and one pair.  The full eight-batch
state, which is what accepting every proxy-improving move produces, worsens D2
by $0.0081$ and ARC option-KL by $0.214$, and the complete 26-batch output
lowers zero-shot HellaSwag accuracy by $6.9$ points and MMLU by $0.8$
(Table~\ref{tab:constructors}).  A current-state linearization summed over the
selected moves predicts the D2 sign of the re-selection states on $2.75\%$ of
them, declares $33$ states safe that are not, and selects a state whose exact
D2 change is $+0.0026$ against a prediction of $-0.0056$.  The same moves scored
at their own midpoints give $99.6\%$, one false-safe state, and the safe pair.
Interaction is secondary here (degree-two Walsh energy $2.4\%$ and $4.6\%$ on
the re-selection lattice against $0.014\%$ and $0.24\%$ on the refit lattice)
and its harm concentrates at high support.  The grid snap is a
third regime.  Its lattice is additive to within $0.12\%$ of its Walsh energy,
no state improves both functionals with an interval excluding zero, and the
full snap costs $+0.0018$ in D2 (interval spanning zero) and $+0.0098$ in ARC
option-KL with all $586$ items worsening.  Within this move family, choosing
which groups to snap with codes fixed, the format change is a small additive
cost that no subset of the snap avoids by a resolved margin; a
reconstruction-optimal code reassignment is a large, avoidable one that the
current-state gradient cannot see.

The same distinction resolves a historical number.  An earlier pipeline first
re-ran Lloyd quantization on the classic arm's codebook, changing $9.74$M
strong-plane assignments, and then applied the grid snap: perplexity
$20.93\to36.17$ ($+72.8\%$), against $21.10$ ($+0.84\%$) when the assignments
are kept and only the codebook is snapped; the snap is $1.53\%$ of the
historical increase on the log scale (Appendix~\ref{app:bridge}).  The cost
attributed to the format was the cost of a solver move judged by its proxy at a
stale center: preserve the learned assignments, project only the codebook, and
evaluate any code reassignment as a move on the deployment functional.

\section{From Structure to Construction}
\label{sec:constructors}

Sections~\ref{sec:within} and~\ref{sec:context} prescribe what a constructor
must evaluate: each move at its own midpoint, and each set of moves with its
pairwise terms under every declared functional.

\paragraph{Two constructors.}
\emph{One-shot midpoint selection}: one hooked backward at each candidate's own
midpoint, then the additive choice of the set optimizing one declared functional
subject to the others not worsening.  \emph{Recentered Pareto beam}: keep
non-dominated intermediate states, extend each by candidates that leave it
non-dominated, judge candidates at exact endpoints and recenter after every
extension (a low-rank variant that estimates $\beta$: Appendix~\ref{app:atlas}).
The reference competitor for both is exact endpoint search at the same budget;
a midpoint backward costs $1.5$--$3.6$ endpoint forwards
(Appendix~\ref{app:compute}).

\begin{table}[t]
\centering
\caption{Repairing the 4B code re-selection step: constructors frozen on the fit
split, one state selected on validation, read once on untouched test blocks;
changes from the step's input, paired $95\%$ intervals.  Accuracy: zero-shot label choice (ARC-Challenge
on the 586 test items never used in selection, HellaSwag $10{,}042$, MMLU
$14{,}042$).  The solver's output, all 26
batches, changes these by $+0.7$, $\mathbf{-6.9}$ $[-7.7,-6.1]$ and $-0.8$
$[-1.4,-0.1]$ points.}
\label{tab:constructors}
\vspace{0.3em}
\scriptsize
\setlength{\tabcolsep}{3.5pt}
\begin{tabular}{@{}lccc@{}}
\toprule
& one-shot midpoint selection & current-screened beam & direct endpoint beam \\
\midrule
selection cost & 26 midpoint bwd.\ (${\approx}94$ fwd.) & 199 fwd.-equivalents & 199 endpoint fwd. \\
batches selected & 9 & 6 & 8 \\
$\Delta$ held-out NLL & $-0.007$ $[-0.022,+0.007]$ & $-0.0005$ $[-0.0073,+0.0068]$ & $-0.0004$ $[-0.0070,+0.0078]$ \\
$\Delta$ ARC-C option-KL & $-0.211$ $[-0.280,-0.139]$ & $-0.186$ $[-0.251,-0.121]$ & $-0.226$ $[-0.301,-0.155]$ \\
\midrule
ARC-C acc.\ (pp) & $+2.7$ $[+0.3,+5.1]$ & $+2.1$ $[-0.2,+4.3]$ & $+4.1$ $[+1.7,+6.7]$ \\
HellaSwag acc.\ (pp) & $+2.9$ $[+2.1,+3.7]$ & $+0.8$ $[+0.1,+1.4]$ & $+2.4$ $[+1.7,+3.2]$ \\
MMLU acc.\ (pp) & $+2.8$ $[+2.2,+3.4]$ & $+1.6$ $[+1.1,+2.2]$ & $+3.1$ $[+2.6,+3.8]$ \\
\bottomrule
\end{tabular}
\end{table}

\paragraph{Repairing the re-selection step.}
Table~\ref{tab:constructors} applies both to the 26-batch re-selection bank of
Section~\ref{sec:bridge}.  The one-shot and direct states beat the solver's
output on both functionals with intervals excluding zero (one-shot by $0.057$
$[0.033,0.084]$ NLL and $0.166$ $[0.106,0.224]$ ARC option-KL), and the
screened beam does so at the point estimates.  All three beat the start on
ARC-Challenge option-KL and none resolves held-out NLL against it.  The
one-shot state, at half the budget, is separated from neither search by a
resolved interval, and the gradient-free endpoint search selects eight of its
nine batches.  The screened beam, which cut candidates to four per state by the
current-state row before judging them at exact endpoints, loses to the direct
search on ARC by $0.041$ $[0.022,0.059]$ at the same budget; in its nine
contexts midpoint pricing reproduces the exact local choice $9/9$ where the
current row does $5/9$.  Its midpoint rows ($72/72$ exact signs against $32/72$
for the current row) never entered its decisions, so its repair is
exact-endpoint judging from the state reached.  On
three zero-shot tasks never used in selection the one-shot state gains
$2.7$--$2.9$ accuracy points over the step's input, where the solver's own
output loses $6.9$ on HellaSwag.  At this support one midpoint read per move is
enough, and screening at the current state costs $0.04$ of option-KL.

\begin{table}[t]
\centering
\caption{Constructors on standard GPTQ, Llama-3.2-1B INT3: each of 112
projections keeps its GPTQ codes or takes a same-rate alternative (RTN or
$0.1\%$ damping).  Three one-shot additive selections (minimize predicted
$\Delta$ARC-Easy option-KL subject to $\Delta$NLL $\le0$) differ only in the
price of a single replacement; the exact-endpoint beam spends the midpoint
pricing budget ($352$ endpoint forwards) on exact endpoints.  Changes from the
GPTQ model on 114 unread test blocks and 297 unread items, paired $95\%$
intervals.  Upper block: GPTQ without activation ordering (perplexity
$24.67\to21.94$ for the midpoint state, $\to20.07$ for the beam).  Lower block:
the activation-ordered GPTQ solution, same rules and budget (all three one-shot
prices: Appendix~\ref{app:protocols}).}
\label{tab:llama_oneshot}
\vspace{0.3em}
\scriptsize
\setlength{\tabcolsep}{3.5pt}
\begin{tabular}{@{}lccc@{}}
\toprule
constructor & replacements & $\Delta$ held-out NLL & $\Delta$ ARC-E option-KL \\
\midrule
one-shot, own-midpoint price & 61 & $-0.117$ $[-0.223,-0.020]$ & $+0.005$ $[-0.019,+0.030]$ \\
one-shot, current-state price & 86 & $+0.451$ $[+0.262,+0.659]$ & $+0.011$ $[-0.014,+0.035]$ \\
one-shot, exact-endpoint price & 73 & $-0.109$ $[-0.222,-0.003]$ & $+0.044$ $[+0.017,+0.071]$ \\
exact-endpoint beam & 2 & $\mathbf{-0.206}$ $[-0.260,-0.156]$ & $\mathbf{-0.029}$ $[-0.041,-0.017]$ \\
\midrule
one-shot, own-midpoint price & 37 & $+0.033$ $[+0.006,+0.062]$ & $-0.000$ $[-0.016,+0.014]$ \\
exact-endpoint beam & 2 & $\mathbf{-0.001}$ $[-0.004,+0.001]$ & $\mathbf{-0.003}$ $[-0.008,+0.001]$ \\
\bottomrule
\end{tabular}
\end{table}

\begin{figure}[t]
\centering
\includegraphics[width=0.8\textwidth]{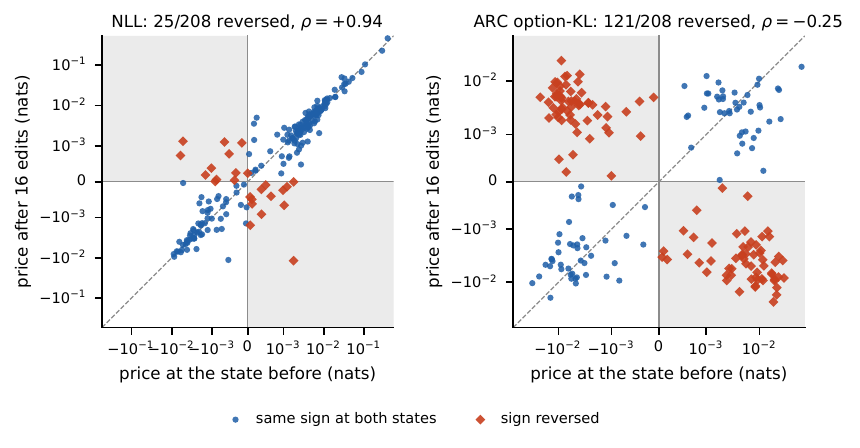}
\caption{The same 208 moves priced at their own midpoints from two states 16
moves apart on the Llama-3.2-1B standard-GPTQ bank (start of the recentered
selection and the state after its first round; symmetric-log axes, shaded
quadrants are sign reversals).  NLL prices keep sign and rank; ARC option-KL
prices reverse on 121 of 208.  A move's utility for the objective functional is
a property of the state it is read from.  Both pairs of consecutive states:
Appendix~\ref{app:protocols}.}
\label{fig:reversal}
\end{figure}

\paragraph{Standard GPTQ, and where the additive read stops.}
A constructor chooses per projection among the GPTQ codes and two same-rate
alternatives, RTN and the $0.1\%$-damping codes
(Table~\ref{tab:llama_oneshot}).  Midpoint prices select 61 replacements whose
state is $11\%$ lower in unread test perplexity than the GPTQ model, with ARC
option-KL where it was.  The objective did not move for any price because the
composition, not the reads, is wrong.  On the calibration units the midpoint state's ARC prediction
misses by $+0.464$: the 61 single-move reads are off by $0.005$ in total, their
composition by $+0.459$.  The same budget spent on exact endpoints from the
state reached finds, in one beam expansion, a two-projection change that
improves both functionals (perplexity $-19\%$) and beats the 61-change state on
both ($-0.089$ $[-0.164,-0.009]$ NLL, $-0.034$ $[-0.057,-0.011]$ ARC).  The
activation-ordered solution leaves less to repair and shows the same boundary
(Table~\ref{tab:llama_oneshot}, lower block): the midpoint one-shot changes 37
projections and raises test NLL by $0.033$ $[0.006,0.062]$ against a predicted
$-0.0001$, while the two-projection state of the same beam improves both
functionals at the point estimate and beats it on NLL by $0.034$
$[0.008,0.062]$.
Re-pricing the 224 alternatives at their midpoints from the state reached after
16 selected moves shows why (Figure~\ref{fig:reversal}).  The ARC price of
$121$ of $208$ comparable moves changes sign (rank correlation $-0.25$) while
NLL prices keep rank $0.94$, and a selection that recenters after every 16
moves improves both functionals on its first round on the development units
and then trades them ($-0.046$ ARC, $+0.090$ NLL against its start on fresh
units).  A move's utility is contextual at the
granularity of a few moves.  The constructor that respects it judges at exact
endpoints from the state reached, sparse where the solver's damage is sparse
(two projections at 1B, eight of 26 batches at 4B); one midpoint read per move
suffices where the support is small enough for the additive read to hold.

\section{Related Work}
\label{sec:related}

\paragraph{Reconstruction-based and loss-aware PTQ.}
AdaRound~\citep{nagel2020adaround}, BRECQ~\citep{li2021brecq}, GPTQ, AWQ and
OmniQuant~\citep{frantar2023gptq,lin2024awq,shao2024omniquant} optimize layer-
or block-local reconstruction objectives and produce the moves we study.
GuidedQuant~\citep{kim2025guidedquant} and
DiscQuant~\citep{chee2025discquant} bring end-loss gradients into the layer
objective and into rounding, evaluated at the current state; layer-local
objectives are documented ignoring cross-block
interactions~\citep{shabanovi2024interactions} and selecting high-loss grid
points~\citep{jhang2026qat_basin}.

\paragraph{Set functions over quantization configurations.}
The coverage model~\citep{hill2026coverage} treats layer-level mixed-precision
damage as a set function that saturation renders additive up to a monotone
transform, the class Section~\ref{sec:context} excludes for same-rate code
edits; CoopQ~\citep{zhao2025coopq} (Shapley layer interactions) and cross-layer
error compensation~\citep{noda2026crosslayer} work at layer granularity.

\paragraph{Task-aware quantization and multi-objective selection.}
TaCQ~\citep{guo2025tacq} and TAQ~\citep{chen2025taq} allocate precision from
task gradients and hidden-state statistics; MGDA and PCGrad
\citep{sener2018multitask,yu2020pcgrad} reconcile conflicting gradients in
training; the fit/validation/test protocol follows
\citet{berk2013postselection,wasserman2009screen}.

\paragraph{Scope and limitations.}
\label{sec:limitations}
The exhaustive lattices have eight primitives each, on one 0.6B bank and three
moves of one 4B model; move-level results cover two Qwen3 scales and
Llama-3.2-1B and 3B; intervals are paired bootstrap or Student-$t$ over
calibration units; no constructor uses the pairwise term itself.%
\label{sec:conclusion}

% Page-budget probe: the ICLR limit applies to the main text; the two
% statements below are exempt (ICLR 2027 author guidelines).
% scripts/check_page_budget.py reads this label from main.aux.
\label{endofmaintext}

\section*{Reproducibility Statement}
Every quantity in this paper is a paired difference between two legal hard
states, or a gradient response, measured on stated statistical units; the
appendix gives the construction rule of each bank, the split and selection rule
of each constructor, which was fixed before the evaluation units were read, and
the unit counts behind each interval (Appendices~\ref{app:bank}--\ref{app:protocols}).
Proofs of Propositions~\ref{prop:midpoint}--\ref{prop:path} are in
Appendix~\ref{app:proofs}.  Our GPTQ implementation matches the reference
implementation to zero decoded-weight difference in both configurations used
(Appendix~\ref{app:bank} and~\ref{app:protocols}).  The anonymous supplementary
material contains the code, the frozen banks with their selection records, the
per-unit measurements behind every table, and the scripts that regenerate
Figures~\ref{fig:reversal}--\ref{fig:atlas} from those measurements, each script
recomputing the numbers it draws and stopping if they differ from the ones
printed here.

\section*{AI Use Statement}
Generative AI coding assistants were used to implement the measurement and
search code and to run and monitor the experiments.  The resulting GPTQ kernel
was verified against the reference implementation to zero decoded-weight
difference, and every number in the manuscript was traced by the authors to
its source record.  We have reviewed all AI-assisted work and take
responsibility for the final content of this work, including text, claims and
artifacts produced with the aid of generative AI.

\bibliographystyle{iclr2027_conference}
\bibliography{references}

\appendix
\section{Proofs}
\label{app:proofs}

\subsection{Midpoint identity (Proposition~\ref{prop:midpoint})}

\begin{proof}
Let $u=d/2$.  Second-order Taylor expansion with integral remainder about $m$
gives $F(m\pm u)=F(m)\pm\nabla F(m)^\top u+\tfrac12u^\top\nabla^2F(m)u+R_\pm$
with
\begin{equation}
  \abs{R_\pm}
  =\Bigl|\int_0^1(1-s)\,u^\top[\nabla^2F(m\pm su)-\nabla^2F(m)]u\,ds\Bigr|
  \le\int_0^1(1-s)\,L_Hs\norm u^3ds=\frac{L_H\norm u^3}{6}.
\end{equation}
Subtracting, the even terms cancel:
$F(q_1)-F(q_0)=2\nabla F(m)^\top u+(R_+-R_-)=d^\top\nabla F(m)+r$ with
$\abs r\le 2L_H\norm{d/2}^3/6=L_H\norm d^3/24$.  For the current-state
expansion, Taylor's theorem with Lagrange remainder gives
$F(q_1)-F(q_0)=d^\top\nabla F(q_0)+\tfrac12d^\top\nabla^2F(\xi)d$ for some $\xi$
on the segment.
\end{proof}

An endpoint-centered bound must control two paths from $W$ to the endpoints and
scales as $\frac{L_H}{6}(\norm{a-d/2}^3+\norm{a+d/2}^3)$ with $a=m-W$; its ratio
to the midpoint bound is at least $8(\norm a/\norm d-\tfrac12)_+^3$, which grows
without limit as the move becomes small relative to the total quantization
displacement.

\subsection{Coefficients under constant curvature (Proposition~\ref{prop:quadratic})}

\begin{proof}
With $\nabla^2F\equiv H$ on the convex hull of the lattice,
$F(q_0+Dx)-F(q_0)=x^\top D^\top\nabla F(q_0)+\tfrac12x^\top D^\top HDx$ exactly.
Write $a_i=d_i^\top\nabla F(q_0)$ and $B=D^\top HD$.  Since $x_i^2=x_i$ on
$\{0,1\}^n$,
$\tfrac12x^\top Bx=\tfrac12\sum_iB_{ii}x_i+\sum_{i<k}B_{ik}x_ix_k$, hence
$f(x)=\sum_i(a_i+\tfrac12B_{ii})x_i+\sum_{i<k}B_{ik}x_ix_k$ and $R_{\ge3}\equiv0$.
Comparing with the Möbius expansion \eqref{eq:moebius} gives
$s_i=a_i+\tfrac12B_{ii}$ and $\beta_{ik}=B_{ik}=d_i^\top Hd_k$.  The own-midpoint
response is $d_i^\top\nabla F(q_0+\tfrac12d_i)=a_i+\tfrac12d_i^\top Hd_i=s_i$.
At the full-proposal midpoint $m=q_0+\tfrac12D\onev$,
$v_i=d_i^\top\nabla F(m)=a_i+\tfrac12\sum_kB_{ik}=s_i+\tfrac12\sum_{k\ne i}\beta_{ik}$.
Summing over a subset,
$\sum_iv_ix_i=\sum_is_ix_i+\tfrac12\sum_i\sum_{k\ne i}\beta_{ik}x_i$, and
subtracting $f(x)=\sum_is_ix_i+\sum_{i<k}\beta_{ik}x_ix_k
=\sum_is_ix_i+\tfrac12\sum_i\sum_{k\ne i}\beta_{ik}x_ix_k$ leaves
$\tfrac12\sum_i\sum_{k\ne i}\beta_{ik}x_i(1-x_k)=\tfrac12x^\top\beta(\onev-x)$.
\end{proof}

Without constant curvature, for a subset $S$ with displacement $d_S$ and
complement $d_{\bar S}$, the fundamental theorem of calculus along the segment
from the own midpoint $m_S=q_0+\tfrac12d_S$ to the proposal midpoint
$m=m_S+\tfrac12d_{\bar S}$ gives
\begin{equation}
  \sum_{i\in S}v_i-d_S^\top\nabla F(m_S)
  =\tfrac12\int_0^1d_S^\top\nabla^2F\!\bigl(m-\tfrac t2d_{\bar S}\bigr)d_{\bar S}\,dt ,
\end{equation}
which reduces to $\tfrac12x^\top\beta(\onev-x)$ when the Hessian is constant.

\subsection{Sequential midpoint pricing}

\begin{proposition}[Sequential midpoint pricing]
\label{prop:path}
For a path $q_t=q_{t-1}+d_t$ with midpoints $m_t$,
$F(q_T)-F(q_0)=\sum_t d_t^\top\nabla F(m_t)+\sum_t r_t$ with
$\abs{r_t}\le L_{H,t}\norm{d_t}^3/24$, where $L_{H,t}$ bounds the Hessian's
Lipschitz constant on segment $t$; under constant curvature every $r_t$
vanishes, and within a fixed binary chart $q(S)=q_0+\sum_{i\in S}d_i$ the exact
marginal of adding move $i$ to an already selected set $S$ is
$s_i+\sum_{k\in S}\beta_{ki}$.  (For categorical replacements the constructors
of Section~\ref{sec:constructors} use the pathwise identity with state-specific
moves.)
\end{proposition}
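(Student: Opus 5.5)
The plan is to prove the three claims of Proposition~\ref{prop:path} in order, each by reduction to a result already in hand. For the path identity, I apply Proposition~\ref{prop:midpoint} to each segment $[q_{t-1},q_t]$ separately. Its midpoint is $m_t=q_{t-1}+\tfrac12 d_t$, and the Hessian is $L_{H,t}$-Lipschitz on that segment, so
\[
F(q_t)-F(q_{t-1})=d_t^\top\nabla F(m_t)+r_t,\qquad \abs{r_t}\le L_{H,t}\norm{d_t}^3/24 .
\]
Summing over $t=1,\dots,T$, the left side telescopes to $F(q_T)-F(q_0)$, which gives the first claim. The only care needed is that each $L_{H,t}$ is local to its own segment; the midpoint proof only uses the Hessian on $[q_{t-1},q_t]$, so this is legitimate.

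For the constant-curvature claim that every $r_t$ vanishes, I use the proof of Proposition~\ref{prop:midpoint} directly rather than its bound. If $\nabla^2F\equiv H$ on the segment, the integral remainders $R_\pm$ are identically zero, because the integrand $u^\top[\nabla^2F(m\pm su)-\nabla^2F(m)]u$ vanishes. Equivalently, one can take $L_{H,t}=0$ in the bound.

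For the marginal, I work in the chart $q(S)=q_0+\sum_{i\in S}d_i$ with $i\notin S$. Here I invoke Proposition~\ref{prop:quadratic}, which under constant curvature on the convex hull gives
\[
f(x)=\sum_i s_ix_i+\sum_{j<k}\beta_{jk}x_jx_k
\]
exactly, with no higher-order terms. Let $x=\onev_S$ and $x'=\onev_{S\cup\{i\}}$. The difference $f(x')-f(x)$ keeps only the terms that contain $x_i$:
\[
f(x')-f(x)=s_i+\sum_{k\in S}\beta_{ik},
\]
and $\beta$ is symmetric. As a cross-check, and to tie the result to the sequential reading, I also apply the path identity to the single step $d_i$ from $q(S)$. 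Its midpoint is $q_0+d_S+\tfrac12 d_i$, and expanding the gradient there linearly with Hessian $H$ gives
\[
d_i^\top\nabla F(q_0)+\tfrac12 d_i^\top Hd_i+\sum_{k\in S}d_i^\top Hd_k=s_i+\sum_{k\in S}\beta_{ki},
\]
using \eqref{eq:coeffs}. This shows that the own-midpoint price read from the reached state equals the exact marginal.

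The main obstacle is not computational but one of scope. Constant curvature has to hold on the convex hull containing every segment used, and the parenthetical remark on categorical replacements is not part of the binary chart. I will state that for categorical alternatives only the pathwise telescoping identity is claimed, with $d_t$ taken to be state-specific. That claim follows from the first part alone, since the first part never used the chart structure.
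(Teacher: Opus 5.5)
Your proof is correct, and its first two parts are exactly the paper's: apply Proposition~\ref{prop:midpoint} on each segment, telescope, and take $L_{H,t}=0$ under constant curvature. For the marginal, the paper's argument is what you call the cross-check. It expands the gradient at $q_S+\tfrac12 d_i$ with the constant Hessian, reads off $s_i+\sum_{k\in S}\beta_{ki}$ via \eqref{eq:coeffs}, and identifies that price with the exact marginal because the remainder on that last segment vanishes.

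Your primary route differs. It takes the discrete difference of the exact quadratic pseudo-Boolean form supplied by Proposition~\ref{prop:quadratic}. This is more elementary and needs only $R_{\ge3}\equiv0$ on the lattice, not constant curvature. On its own, however, it says nothing about gradient reads. The paper's route delivers the operational content of the proposition: the own-midpoint price read from the reached state equals the exact marginal, so sequential pricing accumulates the pairwise term without estimating $\beta$. That identification is the step that genuinely uses constant curvature, since agreeing with a quadratic at the vertices does not fix the gradient field inside the cube. Your two computations together recover it as well.
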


\begin{proof}
Apply Proposition~\ref{prop:midpoint} to each segment $[q_{t-1},q_t]$ and sum;
the telescoping left-hand sides give $F(q_T)-F(q_0)$.  Under constant curvature
$H$ each $r_t$ is zero.  For the marginal, with $q_S=q_0+\sum_{k\in S}d_k$ and
$m=q_S+\tfrac12d_i$, $d_i^\top\nabla F(m)=d_i^\top\nabla F(q_0)+\sum_{k\in S}d_i^\top Hd_k+\tfrac12d_i^\top Hd_i
=s_i+\sum_{k\in S}\beta_{ki}$ by \eqref{eq:coeffs}.
\end{proof}

\subsection{Undeclared functionals}
\label{app:unprotected}

Let an algorithm choose $x\ne0$ from the responses of the declared functionals
alone.  For any undeclared functional pick a response row $w_+$ with
$w_+^\top x>0$ and set $w_-=-w_+$; the two worlds present identical declared
inputs, hence identical output $x$, and opposite signs for the undeclared
change.  No function of the declared responses constrains that sign; a
measured response for the undeclared functional does, as the D5 prediction in
Section~\ref{sec:functional} illustrates.

\section{The 97-Move Bank}
\label{app:bank}

Sources: 22 partial transitions in a 0.6B context $W_A$, 23 in $W_C$, and 26
dose transitions along each of two consecutive steps of a solver acting on the
Qwen3-4B master (bank labels \texttt{M\_A\_B}, the level-refit step from the
grid-legal constrained hard state to free levels with codes fixed, and
\texttt{M\_B\_C}, the code re-selection step by exhaustive magnitude-ranked
assignment with free levels).  A dose transition is a proxy-ranked batch of
groups from the full proposal; each is a legal same-rate move.  Predictor and
exact passes ran as separate jobs with endpoint outcomes unknown at prediction
time.

\begin{table}[h]
\centering
\caption{Sign and rank agreement with the exact functional difference on the 97
frozen moves.}
\label{tab:predictor}
\vspace{0.4em}
\small
\begin{tabular}{@{}lccc@{}}
\toprule
Predictor & Sign & Pearson & Spearman \\
\midrule
Reconstruction proxy $d^\top D(m{-}W)$ & 54/97 & 0.269 & 0.089 \\
Current state $d^\top\nabla F(q_0)$ & 68/97 & 0.830 & 0.652 \\
Own midpoint $d^\top\nabla F(q_0+\tfrac12 d)$ & \textbf{96/97} & \textbf{0.9995} & \textbf{0.9994} \\
\bottomrule
\end{tabular}
\end{table}

\begin{figure}[h]
\centering
\includegraphics[width=\textwidth]{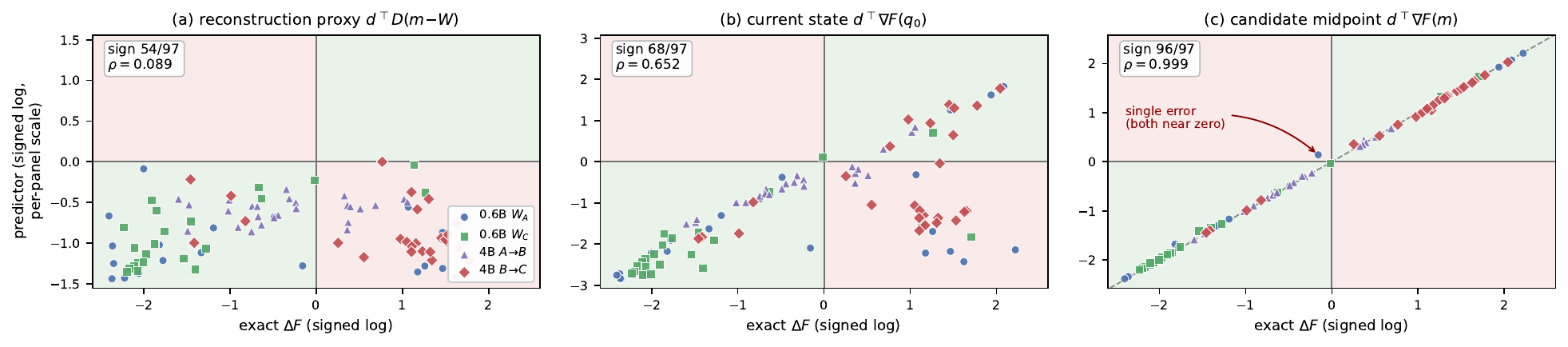}
\caption{Three predictors against the exact difference on the same 97 moves;
green quadrants are sign agreement.  (a) Nearly every proxy value is negative
because a solver only proposes moves its objective likes, which leaves the proxy
near chance at ordering them.  (b) The current-state gradient correlates but
misorients the 4B re-selection cluster.  (c) The midpoint response tracks the
exact difference across the full dynamic range.  Signed-log axes, per-panel $y$
transform; only sign and rank are compared.}
\label{fig:scatter}
\end{figure}

\begin{table}[h]
\centering
\caption{Per-source results against the exact held-out functional difference.}
\label{tab:persource}
\vspace{0.4em}
\small
\begin{tabular}{@{}lrrrrrr@{}}
\toprule
Source & Moves & Midpoint & Current & Proxy & Pearson & Spearman \\
\midrule
0.6B $W_A$ & 22 & 21/22 & 16/22 & 13/22 & 0.99955 & 0.99774 \\
0.6B $W_C$ & 23 & 23/23 & 20/23 & 20/23 & 0.99912 & 0.99901 \\
4B level refit & 26 & 26/26 & 20/26 & 17/26 & 0.99994 & 0.99795 \\
4B code re-selection & 26 & 26/26 & 12/26 & 4/26 & 0.99967 & 0.99385 \\
\midrule
Total & 97 & 96/97 & 68/97 & 54/97 & 0.99953 & 0.99938 \\
\bottomrule
\end{tabular}
\end{table}

The midpoint response was estimated by a central difference at the midpoint;
the two step sizes ($h=1/8$ and $1/16$) agree in sign on all 97 moves.  Edge
sizes follow cumulative power-of-two boundaries plus the exact final eligible
count of each source ($W_A$ up to $524{,}288$ groups, $W_C$ $1{,}048{,}576$,
level refit $11{,}608{,}025$, re-selection $8{,}388{,}608$).  The single sign
error is $W_A$ move
\texttt{e10}: 512 groups, proxy $-1.44\times10^{-7}$, midpoint
$+4.58\times10^{-5}$, exact $-5.12\times10^{-5}$.
Median residual $\abs{\text{midpoint}-\text{exact}}/\abs{\text{exact}}$ per
source: $1.44\%$, $1.47\%$, $0.36\%$, $1.18\%$.  Scoring time: $9{,}683$ s for all
97 moves on one GPU.  Descriptive sample cost: the midpoint estimate stabilized
on fewer calibration blocks than direct endpoint differencing on 4 moves, the
same number on 90, and more on 3.

\paragraph{Llama-3.2-1B standard-GPTQ bank.}
Model \texttt{unsloth/Llama-3.2-1B} (16 layers, 112 quantized projections);
asymmetric INT3, row-wise groups of 128, block 128, dynamic per-group min/max
scale and zero point, upper-Cholesky conditional error feedback, true
sequential calibration (each layer's inputs from the already quantized
preceding layers), 128 calibration sequences of 2048 WikiText-103 tokens, FP32,
no activation ordering, no MSE clipping; damping $1\%$ (the GPTQ model) and
$0.1\%$ (the alternative).  Our kernel reproduces the reference
implementation~\citep{frantar2023gptq} to zero maximum decoded-weight
difference on a real $2048\times2048$ projection; generation takes about $166$ s
per model.  Rate: 3-bit codes plus identical scale/zero metadata, $3.375$ bits per
quantized projection weight for every state compared.  On the calibration
units, RTN $\to$ GPTQ($1\%$): NLL $3.519\to2.946$ ($-0.573$
$[-0.743,-0.350]$), ARC-Easy option-KL $0.259\to0.229$ ($-0.030$
$[-0.049,-0.011]$); GPTQ($1\%$)$\to$($0.1\%$): NLL $+0.439$ $[+0.103,+0.807]$,
ARC $-0.021$ $[-0.038,-0.005]$.  Audit bank: 98 legal interventions frozen from
the three states, 75 RTN$\to$GPTQ($1\%$) and 23 GPTQ($1\%$)$\to$($0.1\%$),
layer, projection-family and dose-chain batches, each scored in its own left
context; NLL predictors on 28 FIT blocks, endpoints on those blocks and on 28
held-out WikiText-103 test blocks; ARC on 592 label-free ARC-Easy development
items.  Two reconstruction scores per move from the GPTQ($1\%$) calibration
covariance: group-local (GPTQ's objective) and full contextual layer
reconstruction including cross-group terms.  Predictor comparison is
Table~\ref{tab:crossarch}; same-unit NLL mean absolute error $0.0338$
(current), $0.0431$ (Gauss--Newton), $0.00111$ (midpoint); ARC $0.0085$,
$0.0097$, $0.00086$.  Held-out NLL versus ARC exact directions: both improve
37, NLL improves and ARC worsens 34, NLL worsens and ARC improves 12, both
worsen 15; on the 62 moves resolved on both, 23/25/6/8.  By source: RTN$\to$GPTQ
75 moves, midpoint 75/75 NLL (FIT), 74/75 ARC, 67/75 held-out NLL, conflicts
37/75 (27/50 resolved); GPTQ($1\%$)$\to$($0.1\%$) 23 moves, 22/23, 22/23,
19/23, conflicts 9/23.  NLL scoring $7{,}130$ s on one GPU.

\paragraph{Llama-3.2-1B one-shot selection on standard GPTQ.}
State space: each of the 112 projections takes its GPTQ($1\%$) codes, its RTN
codes or its GPTQ($0.1\%$) codes with the matching scale/zero, 224 single
replacements from the GPTQ($1\%$) model, chosen without any functional score.
FIT: the 28 calibration blocks and 592 items; VALIDATION 124 blocks and 297
items; TEST 114 blocks and 297 items, blocks and items never read before this
construction (the items are the remaining clean ARC-Easy development ids).
Selector: minimize predicted $\Delta$ARC option-KL subject to predicted
$\Delta$NLL$\le0$, one alternative per projection, no support limit, solved to
zero gap; prices from the own-midpoint derivative of each replacement, from one
current-state gradient row, or from each replacement's exact FIT endpoint;
candidates and the midpoint primary frozen before VALIDATION/TEST.  Selected
61/86/73 replacements.  Midpoint state: VALIDATION $-0.0872$ $[-0.1340,-0.0409]$
NLL, $+0.0003$ $[-0.0287,+0.0288]$ ARC; TEST $-0.1172$ $[-0.2230,-0.0200]$
(perplexity $24.67\to21.94$, relative change CI $[-20.0\%,-2.0\%]$), $+0.0053$
$[-0.0193,+0.0295]$.  Current state: TEST $+0.4505$ $[+0.2621,+0.6589]$,
$+0.0108$ $[-0.0139,+0.0349]$.  Exact-singleton state: TEST $-0.1085$
$[-0.2218,-0.0028]$, $+0.0439$ $[+0.0169,+0.0714]$.  Paired midpoint minus
exact-singleton: TEST ARC $-0.0386$ $[-0.0535,-0.0236]$, NLL $-0.0086$
$[-0.0546,+0.0390]$; midpoint minus current: TEST NLL $-0.5677$
$[-0.7298,-0.4206]$.  Predicted composition on FIT from base $F_0=0.2292$:
$\Delta$ARC $-0.782$ (current), $-0.452$ (midpoint), $-0.473$ (exact
singleton), each below $-F_0$; KL non-negativity therefore bounds the exact
singleton state's FIT residual $R(S)=F(q_S)-F(q_0)-\sum_{g\in S}[F(q_g)-F(q_0)]$
by $R(S)\ge0.2437$ (measured below: $+0.524$).  Cost on FIT: one current row $43$ s; 224 own-midpoint
derivatives $5{,}274$ s; 224 exact endpoints $3{,}352$ s.

\paragraph{Earlier Llama solver variant.}
A first bank on the same model used a frozen-scale code selection whose column
error feedback took $A^{-1}_{ij}/A^{-1}_{ii}$ from the full damped inverse
rather than GPTQ's Cholesky factor (the two differ at $3/48$ codes on a
controlled example), with the same 98-move design.  Held-out NLL signs: proxy
$59$, current $71$, Gauss--Newton $87$, midpoint $89$ (resolved 76: $47/59/70/74$);
ARC-Easy option-KL $43/88/86/95$ (resolved 70: $32/68/66/70$); conflicts $54/98$
and $28/52$ resolved.  Whole models on held-out blocks: RTN $3.558\to$ selection
$3.259$ ($-0.299$ $[-0.335,-0.262]$; ARC $+0.006$ $[-0.016,+0.028]$); the
$0.1\%$-damping variant $3.816$ ($+0.557$ $[+0.513,+0.597]$; ARC $+0.334$
$[+0.290,+0.379]$).  Runs of $7{,}874$ s and $16{,}826$ s on one RTX 5090.

\section{Exhaustive Lattices}
\label{app:atlas}

Both lattices: 8 primitives, 256 exact hard states, three functionals, 14
centers, four-fold deterministic subset cross-validation, exact state order by
integer mask.  Witness primitives are bank batches $\{4,5,6,10,14,15,17,20\}$;
stratified primitives $\{2,3,6,10,14,17,20,22\}$ chosen from group count,
separation, edge identity and a fixed random salt.  Peak allocated memory: D0
$26.7$ GB, ARC $13.0$ GB, HellaSwag $18.5$ GB.

\begin{table}[h]
\centering
\caption{Stratified lattice, four-fold subset cross-validation: sign accuracy /
RMSE of each set-function model.}
\label{tab:atlascv}
\vspace{0.4em}
\small
\begin{tabular}{@{}lccc@{}}
\toprule
Model & D0 NLL & ARC KL & Hella KL \\
\midrule
additive & 100.0\% / 0.00292 & 90.6\% / 0.0766 & 88.3\% / 0.0505 \\
coverage & 71.1\% / 0.0161 & 55.9\% / 0.1195 & 65.2\% / 0.0684 \\
full quadratic & 100.0\% / 0.00052 & 92.6\% / 0.0301 & 93.4\% / 0.0153 \\
sparse $k{=}8$ & 100.0\% / 0.00071 & 94.1\% / 0.0309 & 93.4\% / 0.0180 \\
sparse $k{=}16$ & 100.0\% / 0.00051 & 94.5\% / 0.0296 & 93.4\% / 0.0154 \\
low-rank $r{=}1$ & 100.0\% / 0.00151 & 93.8\% / 0.0559 & 91.4\% / 0.0307 \\
low-rank $r{=}4$ & 100.0\% / 0.00052 & 92.6\% / 0.0314 & 94.1\% / 0.0157 \\
\bottomrule
\end{tabular}
\end{table}

\begin{table}[h]
\centering
\caption{Pareto-front recall / precision.  Exact fronts: 11 states (witness),
30 states (stratified).}
\label{tab:pareto}
\vspace{0.4em}
\small
\begin{tabular}{@{}lcc@{}}
\toprule
Model & witness & stratified \\
\midrule
additive & 54.5\% / 85.7\% & 36.7\% / 91.7\% \\
coverage & 0\% / 0\% & 10.0\% / 75.0\% \\
full quadratic & 63.6\% / 100\% & 86.7\% / 86.7\% \\
sparse $k{=}8$ & 72.7\% / 100\% & 86.7\% / 78.8\% \\
sparse $k{=}16$ & 63.6\% / 100\% & 83.3\% / 89.3\% \\
low-rank $r{=}1$ & 63.6\% / 100\% & 56.7\% / 73.9\% \\
low-rank $r{=}4$ & 72.7\% / 100\% & 90.0\% / 87.1\% \\
\bottomrule
\end{tabular}
\end{table}

Exact common-descent states: 69/256 (witness), 48/256 (stratified).  Under a
raw max--min rule the stratified lattice's additive, quadratic, sparse, low-rank
and recentered constructors all select mask 159, batches $\{2,3,6,10,14,22\}$,
with exact deltas $(-0.0268,-0.1698,-0.0697)$; on the witness lattice the additive
model selects mask 63 $(-0.0404,-0.4469,-0.1299)$ and the quadratic family mask
182 $(-0.0756,-0.1384,-0.0773)$, both on the front.  Sparse $k{=}4$ produced one
false-safe state on the witness lattice (predicted three negatives, exact ARC KL
$+0.0026$).  Center-probe tomography with four random centers and rank-4
truncation: sign accuracy $78.5/84.4/90.2\%$ (witness) and $79.3/94.1/93.0\%$
(stratified); with all thirteen non-zero probes the stratified ARC and HellaSwag
RMSE are $0.0775$ and $0.0489$ against $0.0301$ and $0.0153$ for the
endpoint-fitted quadratic.  Cross-functional pair coordinates ($3\times28$): first
two singular directions carry $99.89\%$ (witness) and $99.90\%$ (stratified) of
the energy; with rank at most three this indicates a weak third direction rather
than a shared subspace, and we treat the per-functional participation ranks in
Table~\ref{tab:atlas} as the load-bearing figures.  Dividing $\beta_{ik}$ by the
decoded-weight norms $\|d_i\|_2\|d_k\|_2$ of the two moves leaves the
participation ranks at $2.04$--$2.64$ with the top four spectral components
carrying at least $99.67\%$ of the energy in every cell (normalizing by
$\sqrt{G_iG_k}$ group counts: $2.04$--$2.32$); on the stratified ARC cell,
however, $97.8\%$ of the normalized squared energy is incident to one two-group
move, a star pattern whose spectrum is concentrated by construction.  The
compression is a property of these lattices, not evidence of a shared latent
subspace across moves.

\begin{table}[h]
\centering
\caption{Set-function controls on the two lattices.  Left: four-fold subset
cross-validation of the optimal monotone single-index model $h(a^\top x)$
(isotonic link, Isotron direction updates, starts at the additive direction, the
coordinate axes and 32 frozen random directions) against the quadratic family.  Right: Hamming-stratified extrapolation on the stratified
lattice, fitted on the 93 states of support $\le3$ and evaluated on the 163
states of support $4$--$8$.  FS: false-safe states (predicted improving on all
three functionals, exact worsening on at least one).}
\label{tab:controls}
\vspace{0.4em}
\footnotesize
\setlength{\tabcolsep}{3.5pt}
\begin{tabular}{@{}lccc@{}}
\toprule
& \multicolumn{2}{c}{4-fold CV Pareto recall (FS)} & Support $\le3\to4$--$8$, stratified \\
\cmidrule(lr){2-3}\cmidrule(lr){4-4}
Model & witness & stratified & Pareto recall / precision (FS) \\
\midrule
additive (zero intercept) & 54.5\% & 36.7\% & 52.0\% / 86.7\% (1) \\
affine (free intercept) & 63.6\% & 30.0\% & 52.0\% / --- \\
monotone single-index & 45.5\% (8) & 30.0\% (11) & 32.0\% / 80.0\% (1) \\
full quadratic & 63.6\% & 86.7\% & \textbf{92.0\% / 100\%} (0) \\
sparse $k{=}16$ & 63.6\% & 83.3\% & 92.0\% / 88.5\% (0) \\
low-rank $r{=}4$ & 72.7\% & 90.0\% & 80.0\% / 87.0\% (0) \\
\bottomrule
\end{tabular}
\end{table}

On the extrapolation split the full quadratic lowers the ARC RMSE from $0.0997$
to $0.0470$ and the HellaSwag RMSE from $0.0719$ to $0.0245$ relative to the
additive model.  Free-intercept affine control (same folds and splits): witness
pooled recall $63.6\%$ against $54.5\%$ zero-intercept and $63.6\%$ quadratic;
stratified $30.0\%$ against $36.7\%$ and $86.7\%$; low-to-high extrapolation
$80.0\%/52.0\%$ (witness/stratified) for both additive variants.  Full-fit
recall on the witness lattice is $72.7/81.8/63.6\%$ (zero-intercept / affine /
quadratic), so the witness front does not favour the quadratic under every
metric.

In the reverse direction (fit on support $\ge5$, evaluate on support $\le3$)
the quadratic family recovers $100\%$ of the witness front against $50\%$ for
the additive model and $67$--$100\%$ of the stratified front against $33\%$,
with one to four false-safe states for some quadratic variants; high-to-low
transport supports the structure without certifying it.  On the witness
lattice the low-to-high direction gives $80\%$ recall for both the additive
and quadratic models, with false-safe states $3\to0$.

\paragraph{Contextual reversals.}
For each lattice and functional, all $8\times128=1{,}024$ same-edit contrasts
$f(S{\cup}\{i\})-f(S)$ and $28\times64=1{,}792$ ordering contrasts
$f(S{\cup}\{i\})-f(S{\cup}\{j\})$ were tested with $10{,}000$ paired bootstrap
draws over the shared statistical units (token-weighted blocks for NLL,
individual items for the option-KL functionals); the maximum standardized
contrast error within each cell sets the simultaneous critical value
($3.92$--$4.17$ at $\alpha=0.05/6$ across the six cells), and a witness requires
both contexts resolved with opposite signs.  Resolved same-edit sign reversals
out of 8 edits / ordering reversals out of 28 pairs: stratified D0 $0/1$, ARC
$6/14$, HellaSwag $4/11$; witness D0 $0/6$, ARC $5/14$, HellaSwag $4/12$.  Under
$F(S)=h(\sum_{i\in S}a_i)$ with monotone $h$, neither reversal can occur, since
the index increment of edit $i$ is $a_i$ in every context and the index
difference of $S{\cup}\{i\}$ against $S{\cup}\{j\}$ is $a_i-a_j$.

\paragraph{Front recall under resampling, quality tolerance and fixed choice.}
Using the pooled four-fold predictions without refitting, $2{,}000$ paired
resamples of the statistical units (Monte Carlo error at most $1.1$ points) give
mean stratified recall $35.9\%$ (additive), $33.3\%$ (affine), $84.4\%$ (full
quadratic) and $87.8\%$ (rank 4); witness $59.0/68.0/67.0/77.1\%$.
Quality-tolerant coverage asks whether an exact endpoint on the predicted front
is within $0$, $1$ or $2$ median state standard errors of every exact-front
endpoint on each functional: stratified $66.7/83.3\%$ (additive),
$70.0/86.7\%$ (affine) and $100/100\%$ for both quadratics at one and two
standard errors; on the witness lattice every model reaches $100\%$ at one.
Fixed-choice utility: the state minimizing the worst standardized functional
among those predicted to improve all three (standardization by the lattice's
exact-state standard deviations, fixed for all models and resamples; state $0$
if none), scored by regret against the exact best feasible state.  Stratified:
additive and affine select mask 150 (regret $0.0716$; on the exact front in
$24.8\%$ of resamples), full quadratic mask 21 ($0.0396$; $99.95\%$), rank 4 mask
149 ($0.0855$; $43.3\%$); all three selected states improve all three
functionals in every resample.  Witness: additive, affine and full quadratic
select mask 55 ($0.0267$), rank 4 mask 54 ($0.0184$).  The tolerances are
descriptive and are not deployment-equivalence margins.

\paragraph{Response operator against exact pairwise coefficients.}
With $\beta_{ik}=f(e_i{+}e_k)-f(e_i)-f(e_k)$ and the response operator fitted
by $2[r(z)-r(0)]=Bz$ on the thirteen non-zero centers (free diagonal, off-diagonal
entries compared, diagonal absorbed into the linear term), sign agreement is
$92.9/64.3/78.6\%$ (witness D0/ARC/Hella) and $67.9/82.1/85.7\%$ (stratified),
Spearman $0.67$--$0.94$, normalized RMSE $0.51$--$0.81$; $3.6$--$32.1\%$ of the
pairwise discrepancies fall within $1.96$ paired-bootstrap standard errors
($1{,}000$ replicates, joint resampling of blocks and microbatches).  The
operator recovered from the eight unit centers alone mispredicts the response at
the full and four random centers on $31/39/34$ (witness) and $34/39/37$
(stratified) of 40 components with paired $95\%$ radii excluding zero, and the
residual grows with the center's distance from the base state.  The
gauge-aligned response quadratic nonetheless reproduces $97.7/92.6/92.2\%$ and
$100/94.1/94.5\%$ of exact endpoint signs.

\begin{figure}[h]
\centering
\includegraphics[width=\textwidth]{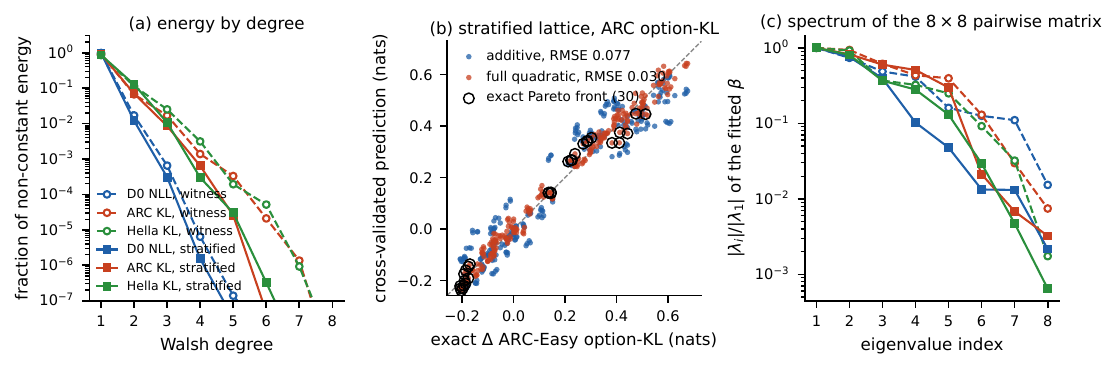}
\caption{Structure of the set functions on the two exhaustive 0.6B lattices.
(a) Walsh energy by degree as a share of the non-constant energy, per
functional (open markers: witness lattice; filled: stratified).  (b) Stratified
lattice, ARC-Easy option-KL: exact endpoint of each of the 256 states against
its four-fold cross-validated prediction from the additive model and from the
full quadratic; circles mark the 30 states of the exact three-functional Pareto
front.  (c) Eigenvalue magnitudes of each functional's fitted $8\times8$
pairwise matrix, relative to the largest.  Script
\texttt{scripts/fig\_interaction\_atlas.py} recomputes every quantity from the
frozen 256-state measurements and asserts the values of Table~\ref{tab:atlas}
before drawing.}
\label{fig:atlas}
\end{figure}

\paragraph{Anchored low-rank search and its anchor-budget simulation.}
The anchored variant of the beam takes the backbone of $B$ from antithetic probe
pairs $z^\pm=(\onev\pm u)/2$, for which $2[r(z^+)-r(z^-)]=Bu$, fits the
functional-specific residual on $4$--$12$ exact pair endpoints chosen before
their values are read, and solves the quadratic pseudo-Boolean problem with the
same beam; it recovers $83\%$ of the stratified and $55\%$ of the witness front.
On the stratified lattice, 315 fixed configurations (exact pair budgets 4, 8,
12; anchor selectors by response disagreement, leverage, manifest diversity, a
hybrid and a hash control; unit-, all- and random-center rank-4 backbones;
affine, subspace and backbone-plus-exact-residual models) were evaluated with
anchors chosen before any unqueried pair value was read.  Quality point (12
disagreement anchors, all-center backbone, exact residual): Pareto recall
$83.3\%$, precision $89.3\%$, 0 false-safe, 1 false-reject, emitted mask 156
$=\{6,10,14,22\}$ with exact deltas $(-0.0256,-0.1927,-0.0843)$ and zero regret
under the response-standardized utility; additive response-only search: recall
$43.3\%$, emitted $\{2,3,6,14,22\}$ with regret $0.080$.  Eight leverage
anchors: recall $90.0\%$, precision $84.4\%$; four diversity anchors: $80.0\%$
recall, no improvement of the emitted state.  Witness lattice, same quality
recipe: recall $54.5\%$, 4 false-safe, 8 false-reject.  Forward-equivalent cost
per functional including one confirmation: $106$--$155$ against $37$ for exact
pair enumeration on eight primitives.  An exact-endpoint oracle beam that reads
endpoints during search finds $63\%$, $97\%$ and $100\%$ of the front at widths
4, 8 and 12 using 69, 95 and 104 unique endpoints; it is the active-query
reference for any recentered method.

\paragraph{Recentered search backend (Llama-3.2-1B, INT3).}
A categorical state over 16 disjoint proxy-dose batches with three code
alternatives each (RTN, the code selection at $1\%$ and at $0.1\%$ damping) supports add, remove
and one-for-one replacement, exact materialization of any mixed state, and hooked
backward rows that agree with the scalar current-state and own-midpoint
derivatives to $10^{-9}$--$10^{-11}$.  On two FIT blocks the cheap screen
(current row, proxy, structural diversity, deterministic random) ranks the exact
best of 32 atomic moves first, and screening plus a top-4 exact judge costs
$126$ s against $410$ s for enumerating all 32 endpoints (break-even at $K=26$);
a per-move Gauss--Newton screen signs $31/32$ moves correctly yet ranks the exact
best $29$th.  A closed-form legal scale refit on a 1024-group batch lowers the
weight squared error by $9.06\%$ and raises FIT NLL by $0.0018$, so refit enters
the search as a candidate judged at exact endpoints rather than as a default.
Four own-midpoint anchor rows drift from the base row by $8.7\times10^{-4}$ on
average with 2 of 128 sign changes.

\paragraph{Manifest structural kernel.}
Features per primitive: group count, separation, occupancy ratio, strong
Hamming, layer histogram, projection-family histogram, layer centroid and
spread, proxy scale, own-singleton response.  Models: nine-feature ridge and
symmetric bilinear forms of rank one and two.  The two lattices share five of
eight primitives, so transfer is scored on the eighteen pair identities unseen
in training.  Witness$\to$stratified best sign agreement $61.1/50.0/44.4\%$
(D0/ARC/Hella), stratified$\to$witness $77.8/55.6/66.7\%$, with Spearman between
$-0.19$ and $0.70$ and no model best in both directions on all three
functionals; substituting the
predicted $\beta$ into the quadratic gives Pareto recall $23$--$33\%$ or four to
ten false-safe states.

\section{The 4B Solver Lattices and the Grid-Snap Cells}
\label{app:bridge}

\begin{figure}[h]
\centering
\includegraphics[width=\textwidth]{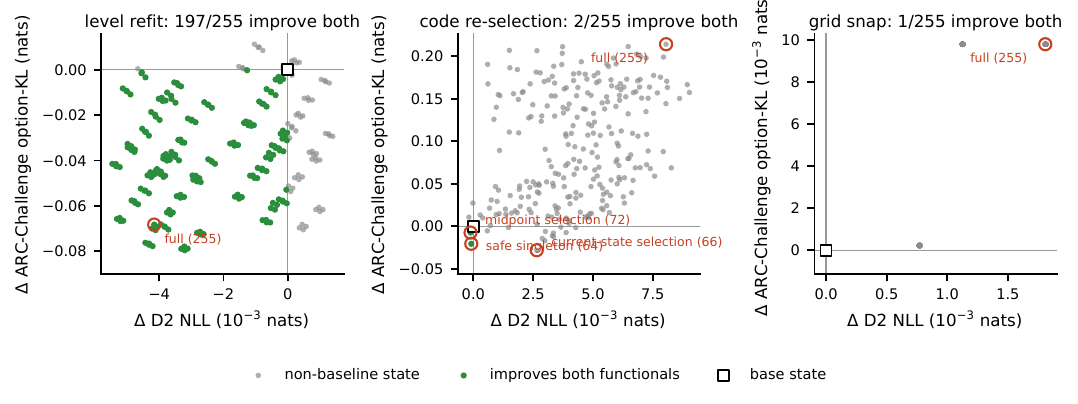}
\caption{Exact attainable sets of the level-refit, code re-selection and grid
snap lattices (Section~\ref{sec:bridge}): the 255 non-baseline states of each
lattice in the (D2 NLL, ARC-Challenge option-KL) plane, states improving both
functionals in green.  On the re-selection lattice the full state (mask 255),
the current-state additive selection (mask 66, unsafe), the midpoint additive
selection (mask 72, safe) and the safe singleton (mask 64) are circled; the
grid-snap lattice is drawn at a $10^{3}$ finer ARC scale.  Script
\texttt{scripts/fig\_bridge\_attainable.py} asserts the common-descent counts
and full-state deltas of Table~\ref{tab:bridge} from the frozen measurements
before drawing.}
\label{fig:bridge}
\end{figure}

\begin{table}[h]
\centering
\caption{Three exhaustive $256$-state lattices of eight manifest-selected dose
batches on Qwen3-4B under D2 NLL and ARC-Challenge option-KL
(Section~\ref{sec:bridge}).  Common descent: states improving both (cert.:
paired intervals excluding zero); D2 sign accuracy of the current-state and
midpoint additive predictions over all 255 states; false-safe: predicted
improving on both, exact worsening on one.  Resolved-state counts for the grid
snap are in the text below.}
\label{tab:bridge}
\vspace{0.3em}
\footnotesize
\setlength{\tabcolsep}{4pt}
\begin{tabular}{@{}lccc@{}}
\toprule
& common & D2 sign, all 255 & false-safe \\
Move & descent & current / midpoint & current / midpoint \\
\midrule
level refit & 197/255 & 85.1 / 97.3\% & 40 / 1 \\
code re-selection & \textbf{2/255} & \textbf{2.75} / \textbf{99.6\%} & 33 / 1 \\
grid snap & 1/255 (0 cert.) & 48.2 / 98.4\% & 19 / 1 \\
\bottomrule
\end{tabular}
\end{table}

The 4B lattices act on two states of Qwen3-4B.  The level-refit and
re-selection lattices act on the master of the grid-trained arm: its grid-legal
constrained hard state (E2M1 magnitude pair $\times$ UE4M3 scale per group of
128, current assignment), the free-level state obtained by re-solving each
group's two magnitudes with codes fixed, and the re-selected-code state obtained
by exhaustive magnitude-ranked assignment with free levels; the reconstruction
objective decreases along this ladder.  The grid-snap lattice acts on the
classic free-codebook arm's 1600-step endpoint (cell A below) and its
plane-preserving snap onto the grid (B$_{\mathrm{plane}}$).  The historical
implementation of that projection re-ran Lloyd's algorithm before snapping
(B$_{\mathrm{legacy}}$); the two are separated below.

\begin{table}[h]
\centering
\caption{Endpoint cells on Qwen3-4B (held-out WikiText perplexity), reproduced
in the Exp.\ C$'$ recovery.  A: free two-bit codebook recovered for 1600 steps.
B$_{\mathrm{plane}}$: A's code planes kept, codebook snapped to the NVFP4 grid
(code-plane reproduction error $0$).  B$_{\mathrm{legacy}}$: Lloyd
re-quantization of A ($9{,}736{,}593$ strong-plane assignments changed, maximal
requantization error $1.0744$) followed by the same snap.}
\label{tab:cells}
\vspace{0.4em}
\small
\begin{tabular}{@{}llrr@{}}
\toprule
Cell & Construction & PPL & vs.\ A \\
\midrule
A & free endpoint & 20.9293 & --- \\
B$_{\mathrm{plane}}$ & grid snap of A, codes fixed & 21.1049 & $+0.84\%$ \\
B$_{\mathrm{legacy}}$ & Lloyd re-assignment of A, then snap & 36.1654 & $+72.80\%$ \\
\bottomrule
\end{tabular}
\end{table}

On the log-perplexity scale B$_{\mathrm{plane}}$ accounts for $1.53\%$ of the
A$\to$B$_{\mathrm{legacy}}$ damage.

\paragraph{Grid-snap lattice.}
The bank covers the $28{,}385{,}280$ groups whose codebook changes under the
snap, split into 26 dose batches by positive proxy cost; the L8 selected batches
$\{3,6,9,13,15,17,22,25\}$ before any functional was read.  Full snap: D2
$+0.00180859$ (CI $[-0.00083762,+0.00419934]$), ARC option-KL $+0.00978934$
(CI $[+0.00827162,+0.01157384]$, all $586$ item deltas positive).  Point-level
common descent: one state (batches $\{6,9\}$; D2 $-3.7\times10^{-8}$, ARC
$-1.4\times10^{-9}$, both intervals spanning zero); point-level Pareto front
$\{6,18,19,22,25,40,50,52\}$.  Degree-two energy $0.1165\%$ (D2) and $0.0134\%$
(ARC); degree-three-and-higher below $10^{-6}$.  Over all 255 states the
current-state additive rule signs $48.2\%$ (D2) and $75.3\%$ (ARC) and the
midpoint rule $98.4\%$ and $98.0\%$, with 19 and 1 false-safe states; restricted
to states whose paired $95\%$ interval excludes zero ($10{,}000$ replicates) the
counts are $103/103$ and $224/228$ (current) against $103/103$ and $228/228$
(midpoint).  The exact-singleton additive rule signs $98.8\%/99.2\%$ with no
false-safe state and abstains.  Sums of the 28 pair coefficients:
D2 $-9.1\times10^{-5}$ $[-1.46\times10^{-4},-3.6\times10^{-5}]$, ARC
$-2.2\times10^{-4}$ $[-4.8\times10^{-4},+2.3\times10^{-5}]$.

\paragraph{Solver-step lattices.}
Both lattices use the same manifest-only selection rule on the 26 dose batches
of each step (level-refit batches $\{1,4,8,12,16,19,22,24\}$, re-selection
batches $\{3,7,10,12,14,18,22,23\}$); the 26 known batch outcomes were not
consulted, and the selected batches are representative of them (refit: 3/8
selected and 6/18 omitted batches harm D2; re-selection: 7/8 and 15/18).
Functionals: D2 odd NLL on 28 held-out blocks and ARC-Challenge option-KL on 586
label-free development items; 18 centers (zero, eight unit, full, four
antithetic pairs).  Exact fronts: refit 10 states, 197 common-descent (3
singletons, 194 compositions); re-selection 4 states $\{8,64,66,72\}$,
common-descent $\{64,72\}$ only.  Full quadratic $R^2$: $0.999999/0.999978$
(refit D2/ARC), $0.999342/0.997375$ (re-selection); degree-two energy
$0.014\%/0.235\%$ and $2.413\%/4.560\%$.  Sums of the 28 pair coefficients:
refit ARC $+0.0104$ $[+0.0083,+0.0125]$ (harmful), the other three unresolved.
Rank-4 truncation of the 18-center operator on the re-selection D2 falls to
$53.9\%$ sign agreement, consistent with the center dependence of
Appendix~\ref{app:atlas}.  Peak memory $75$ GiB (refit) and $60$ GiB
(re-selection) on one RTX PRO 6000.

\section{Selection Runs: Protocols and Results}
\label{app:protocols}

All runs share one contract: banks frozen before functionals are measured,
selection on a fit split, confirmation on a disjoint outer split read once,
materialization of the exact-decodable state, own-midpoint rescoring on the
artifact, and target endpoints read only after that.  Intervals are paired
Student-$t$ or item/block bootstrap over calibration units.

\paragraph{Composition does two different jobs.}
On the ARC 4B bank below no single primitive improved both declared
functionals, yet one pair does at exact endpoints (Table~\ref{tab:arcpair}),
with ARC-Easy accuracy $491\to505/594$ against $501$ and $488$ for its
components: the pairwise term expands what is reachable.  On the 0.6B HellaSwag
bank a single primitive already improves all three declared functionals and
beats the selected pair on the target ($-0.078$ against $-0.065$ KL) while the
pair beats it on both generic domains (Table~\ref{tab:components}): the pairwise
term rebalances slack between functionals.  Every constructor re-scores its
candidate at the candidate's own midpoint before reading any target endpoint,
the direct test of \eqref{eq:drift}; on the 1.7B CommonsenseQA bank this
reversed all three candidates that had passed common-center confirmation, so no
state was emitted, and a WinoGrande bank gave the same outcome.

\paragraph{ARC 4B (development).}
$\funcset=\{$D2 NLL, ARC-Easy option-KL$\}$, 26 primitives.  Of 16 pairs passing
a two-means-and-two-halves screen, $\{5,17\}$ had the largest weakest
standardized margin ($3.23$ against $1.36$).  Materialized state: $65{,}552$
groups (16 from primitive 5, $65{,}536$ from 17), all 252 projections decoding
exactly.  Own-midpoint rescoring: D2 $-0.00457/-0.00458/-0.00458$
(coarse/fine/Richardson, 22/28 units negative), target
$-0.04429/-0.04441/-0.04445$ (53/74).  Held-out: D2 odd NLL
$2.99799\to2.99672$; ARC-Easy $491\to505/594$ (17 rescues, 3 regressions,
McNemar $p=0.00258$); gold margin $+0.282$ [$+0.234,+0.328$]; top-1 margin
$+0.192$; option CE $-0.0553$; teacher-KL $-0.0426$.  Undeclared D5: predicted
$+0.0029088$ from the response matrix, observed $+0.0029598$.  The D2 outer
upper bound was $+0.00027$; the exact component controls of
Table~\ref{tab:arcpair} were run after the pair was frozen.

\begin{table}[h]
\centering
\caption{ARC 4B pair and its components at exact endpoints, paired $95\%$
intervals.  Negative is improvement.}
\label{tab:arcpair}
\vspace{0.4em}
\footnotesize
\setlength{\tabcolsep}{3.5pt}
\begin{tabular}{@{}lrrr@{}}
\toprule
State & D2 NLL & ARC option-KL & acc.\ /594 \\
\midrule
primitive 5  & $+0.00167$ $[+0.00027,+0.00313]$ & $-0.04951$ $[-0.06136,-0.03742]$ & 501 \\
primitive 17 & $-0.00259$ $[-0.00533,-0.00019]$ & $+0.01672$ $[+0.00290,+0.03068]$ & 488 \\
pair $\{5,17\}$ & $-0.00127$ $[-0.00421,+0.00129]$ & $-0.04264$ $[-0.05880,-0.02653]$ & \textbf{505} \\
\bottomrule
\end{tabular}
\end{table}

\paragraph{TruthfulQA 4B (pre-registered).}
Three functionals, 26 primitives, minimum-cardinality selector; one fit-feasible
singleton (batch 8, 128 groups).  Validation: D2 $-0.00284$, D3 $-0.00321$,
target $-0.00729$ with halves $-0.00888$ and $-0.00539$; the target's
standardized margin $1.66$ fell short of the Bonferroni critical value $2.47$
(UCB $+0.00353$), so no state was materialized and the 171 evaluation items were
not scored.

\paragraph{HellaSwag 0.6B (consensus protocol).}
23 primitives, three functionals, five-fold consensus with a four-fold identity
requirement.  Support 1 reached 5/5 on $\{14\}$ (pooled out-of-fold weakest
margin $6.76$); support 2 reached 4/5 on $\{4,14\}$ ($7.78$); no support from
3 to 8 reached four folds.  Outer confirmation: $-0.0292$, $-0.0404$, $-0.0395$ with
upper bounds $-0.0230$, $-0.0339$, $-0.0236$; own-midpoint rescoring negative on
27/28, 37/38 and 48/50 units.

\begin{table}[h]
\centering
\caption{HellaSwag components at exact endpoints (0.6B, 1{,}820 FINAL items for
the target columns).  All quantized states answer the same 424 items correctly;
the teacher answers 631.}
\label{tab:components}
\vspace{0.4em}
\small
\begin{tabular}{@{}lrrrr@{}}
\toprule
State & D2 NLL & D3 NLL & target KL & gold margin \\
\midrule
primitive 4 (8 groups) & $-0.00817$ & $-0.00632$ & $+0.01669$ & $-0.02916$ \\
primitive 14 (8{,}192) & $-0.01714$ & $-0.03347$ & $\mathbf{-0.07831}$ & $\mathbf{+0.10801}$ \\
pair (8{,}200) & $\mathbf{-0.02489}$ & $\mathbf{-0.03911}$ & $-0.06458$ & $+0.07488$ \\
\bottomrule
\end{tabular}
\end{table}

Pair intervals on FINAL: target KL $[-0.0705,-0.0587]$, gold margin
$[+0.0610,+0.0887]$, option CE $-0.0713$ $[-0.0825,-0.0599]$, top1--top2 gap
$-0.1730$ $[-0.1837,-0.1624]$; endpoint interactions relative to component sums
$+0.00043$, $+0.00068$, $-0.00296$, $-0.00398$.

\paragraph{4B re-selection bank: one-shot midpoint selection.}
Same bank, splits and functionals as the beam below.  One hooked backward at the
own midpoint of each of the 26 singleton batches on the FIT units (rows already
measured by the beam's anchor jobs were reused); the selector uses the $i$-th
coordinate of the $i$-th row only, sums additively, and takes the set
minimizing predicted ARC-Challenge option-KL subject to predicted $\Delta$D2
$\le0$, frozen before any endpoint was read: $\{0,1,5,6,15,17,21,22,24\}$
(additive FIT prediction $-0.00041$, $-0.134$).  Exact changes from B: FIT
$-0.00139$ / $-0.1098$, VALIDATION $-0.00445$ / $-0.1175$, TEST $-0.00728$
$[-0.02175,+0.00663]$ / $-0.2112$ $[-0.2798,-0.1389]$.  TEST paired against the
solver's output C: $-0.0571$ $[-0.0837,-0.0327]$ / $-0.1659$
$[-0.2239,-0.1057]$; against the beam's primary: $-0.0068$ $[-0.0188,+0.0025]$
/ $-0.0256$ $[-0.0533,+0.0003]$ ($10{,}000$ paired draws; D2 over seven blocks,
ARC over ten stored microbatch means of 154 items).  Cost: 26 own-midpoint jobs,
each $3.63$ endpoint forwards at the median ratio measured on the beam's 36
anchor jobs, against the beam's 199 forward-equivalents.

\paragraph{4B re-selection bank: recentered exact-endpoint beam.}
Bank: the 26 dose batches of the re-selection step; start B (free levels, codes
fixed); functionals D2 odd NLL and ARC-Challenge option-KL; FIT 14 blocks and 18
microbatches of 16 items, VALIDATION 7 blocks and 144 items, TEST 7 blocks and
154 items, all disjoint; beam 4, add-only, slack 0 relative to B.  Per state:
cheap screen (current row, proxy, structural, deterministic random) to four
candidates, four own-midpoint rows, four exact FIT endpoints; persistent Pareto
beam with exact re-anchoring; stop when the frontier is exhausted (nine rounds).
Five exact Pareto states on FIT; frozen four-state slate; validation policy fixed
before reading (minimize ARC option-KL subject to $\Delta$D2$\le0$) selected
$\{6,10,15,17,21,24\}$.  TEST: D2 $3.586104\to3.585646$ ($-0.000458$, paired
$95\%$ $[-0.007320,+0.006771]$, 4/7 blocks negative); ARC option-KL
$0.759318\to0.573742$ ($-0.185575$, $[-0.250825,-0.121108]$, 10/10 microbatches
negative).  Replay audit: zeroing or negating all 36 midpoint rows changes none
of the nine round decisions or the final frontier; the 36 anchor jobs
($10.4$ h) produced the $72/72$ versus $32/72$ diagnostic and nothing the
decisions depended on.  Median cost ratio of an own-midpoint job to an exact
endpoint job: $3.63$; total search budget $199$ endpoint evaluations.  Local
pricing replay at the nine archived base states, same four measured candidates,
same scales, feasibility and tie rules: own-midpoint pricing reproduces the exact
local beam head $9/9$ with $0/36$ false-feasible candidate occurrences, the
recentered current row $5/9$ with $16/36$; in the four disagreements (rounds 4,
6, 7, 9) exact and midpoint pricing keep the incumbent where the current row
expands.

\paragraph{4B re-selection bank: direct endpoint control.}
Same bank, splits, beam semantics, slate rule and validation policy; the frozen
budget of $199$ FIT endpoint evaluations is spent on exact endpoints without any
current-state or midpoint scoring.  Selected state $\{0,1,5,6,15,17,21,24\}$.
Changes from B: FIT $-0.00260$ / $-0.1166$, VALIDATION $-0.00643$ / $-0.0861$,
TEST $-0.00037$ $[-0.00697,+0.00780]$ / $-0.2262$ $[-0.3013,-0.1554]$.  TEST
paired against C: $-0.0502$ $[-0.0792,-0.0278]$ / $-0.1810$ $[-0.2525,-0.1073]$;
against the current-screened beam's primary: $+0.0001$ $[-0.0040,+0.0039]$ /
$-0.0407$ $[-0.0585,-0.0217]$; against the one-shot state: $+0.0069$
$[-0.0004,+0.0166]$ / $-0.0150$ $[-0.0429,+0.0109]$.  On VALIDATION the one-shot
state had the lower ARC option-KL (direct minus one-shot $+0.0314$
$[+0.0042,+0.0597]$); the frozen selection was not revisited.  Search wall-clock
$64.7$ min on one GPU with a persistent evaluator and cached decoded bank, not
comparable to the historical run's timing.

\paragraph{Zero-shot accuracy of the frozen states.}
\label{app:accuracy}
Protocol: all states frozen before any accuracy was read; ARC-Challenge test
($1{,}172$ items), HellaSwag validation ($10{,}042$), MMLU test ($14{,}042$);
zero-shot restricted candidate-label likelihood, i.e.\ the options are shown and
the model's most likely label is its answer (for HellaSwag this is a label
choice over the four shown endings, not the length-normalized ending completion
of the evaluation harness); Qwen3-4B with its direct-answer chat template and
thinking disabled, Llama with a plain prompt; per item the prediction, gold
margin and restricted-option KL to the FP master are stored; differences use
$10{,}000$ paired item bootstrap draws and exact McNemar tests.  Of the
ARC-Challenge items, $586$ had served as the 4B repair's FIT, VALIDATION and TEST
option-KL units; the main text reports the other $586$.  Qwen3-4B accuracy
(ARC-C / HellaSwag / MMLU): FP $85.32/68.74/63.37$; B $66.13/48.73/44.93$; C
$66.55/41.83/44.17$; one-shot $68.77/51.64/47.73$; direct $69.80/51.17/48.07$;
beam primary $68.43/49.50/46.52$.  Changes from B on all $1{,}172$ ARC-C items:
C $+0.43$ $[-1.54,+2.39]$, one-shot $+2.65$ $[+0.94,+4.35]$, direct $+3.67$
$[+1.88,+5.46]$, beam $+2.30$ $[+0.68,+3.92]$; on the 586 unused items, C
$+0.68$ $[-1.88,+3.41]$ (McNemar $p=0.70$), one-shot $+2.73$ $[+0.34,+5.12]$
($p=0.040$), direct $+4.10$ $[+1.71,+6.66]$ ($p=0.0015$), beam $+2.05$
$[-0.17,+4.27]$ ($p=0.088$).  Option-KL differences from B are negative with
intervals excluding zero for all three repaired states on all three tasks.
Llama-3.2-1B: FP $36.18/27.30/36.24$; standard GPTQ($1\%$) $23.98/25.80/26.27$;
midpoint one-shot $25.60/25.52/26.53$, changes $+1.62$ $[-1.79,+4.95]$, $-0.28$
$[-1.49,+0.93]$, $+0.26$ $[-0.77,+1.26]$; option-KL changes $-0.026$, $-0.073$,
$-0.027$, all resolved.  Queue $4$ h $56$ min on one GPU.

\paragraph{FIT decomposition of the Llama one-shot states.}
With $p(S)$ the selector's price sum, $e(S)$ the sum of exact single-replacement
endpoints from the common start and $\Delta(S)$ the exact endpoint of the
composed state on the same FIT units, $p-e$ is the single-move integration
error and $\Delta-e$ the composition residual.  (NLL, ARC option-KL): midpoint
state $\Delta=(-0.106,+0.012)$, $p=(-0.0003,-0.452)$, $e=(-0.023,-0.447)$,
composition $(-0.083,+0.459)$, integration $(+0.023,-0.005)$; current state
$\Delta=(+0.760,+0.002)$ (NLL paired CI $[+0.408,+1.165]$), $p=(-0.802,-0.782)$,
$e=(+1.032,-0.370)$, composition $(-0.272,+0.373)$, integration
$(-1.835,-0.412)$; exact-singleton state $\Delta=(-0.066,+0.051)$,
$p=e=(-0.0005,-0.473)$, composition $(-0.066,+0.524)$.

\paragraph{Llama-3.2-1B matched exact-endpoint beam.}
Budget frozen before search as the one-shot midpoint pricing time divided by
the archived mean exact-singleton time, $\lfloor5{,}274.0/14.96\rfloor=352$
endpoint forwards, all charged (the 224 singleton answers were reused where the
search requested them; 128 composed states were materialized and scored).  A
width-4 exact Pareto beam from the GPTQ($1\%$) model exhausted the budget after
one expansion with a three-state slate of support two; the frozen validation
rule selected $\{9\to\mathrm{RTN},\,12\to\mathrm{GPTQ}(0.1\%)\}$.  VALIDATION:
NLL $3.0975\to2.9314$ ($-0.1661$ $[-0.2118,-0.1240]$), ARC $0.2347\to0.1995$
($-0.0352$ $[-0.0492,-0.0219]$); TEST: $3.2056\to2.9993$ ($-0.2063$
$[-0.2603,-0.1555]$), $0.2110\to0.1821$ ($-0.0289$ $[-0.0407,-0.0173]$).  Against
whole RTN on TEST: $-0.4929$ / $-0.0753$; against whole GPTQ($0.1\%$):
$-0.5862$ / $-0.0130$ $[-0.0331,+0.0075]$.  Against the 61-change midpoint
state, same units: VALIDATION $-0.0789$ $[-0.1047,-0.0518]$ / $-0.0355$
$[-0.0602,-0.0109]$, TEST $-0.0891$ $[-0.1645,-0.0085]$ / $-0.0342$
$[-0.0574,-0.0112]$.  This control was run after the shared VALIDATION/TEST had
been read for the one-shot states; it was pre-registered and its rule did not
depend on them.

\paragraph{Llama-3.2-1B recentered selection on fresh units.}
Start: the 61-change midpoint state.  Each round prices all 224 one-batch
alternatives at their own midpoints from the current state on the development
FIT units and selects at most 16 categorical changes minimizing summed ARC
price subject to summed NLL price $\le0$; the predicted state is advanced to
without veto.  Exact FIT deltas per round (NLL, ARC): $(-0.0057,-0.0498)$,
$(+0.0095,-0.0013)$, $(-0.0039,+0.0639)$ against predictions
$(-0.0002,-0.286)$, $(-0.0018,-0.182)$, $(-0.0015,-0.181)$.  Price drift over the
208 moves that remain non-current at consecutive centers: NLL Spearman
$0.937$ and $0.923$ with $25/208$ and $25/208$ sign flips; ARC Spearman $-0.247$
and $-0.091$ with $121/208$ and $105/208$.  Fresh units: 56 VALIDATION and 56
TEST blocks of 2048 tokens from a newly downloaded C4 validation shard (zero
block-hash overlap with any prior unit) and two unopened ARC-Easy sets of 594
items each.  Frozen validation rule (minimize ARC option-KL subject to point
NLL no worse than the start; the direct state and whole models as controls
only) selected round 2.  TEST: GPTQ($1\%$) $3.9677$ / $0.2353$; RTN $4.1357$ /
$0.2515$; GPTQ($0.1\%$) $4.4127$ / $0.2153$; direct primary $3.8064$ / $0.2063$;
one-shot start $3.8067$ / $0.2250$; rounds 1--3 $3.8371$ / $0.2013$, $3.8963$ /
$0.1791$, $3.9453$ / $0.2854$.  Round-2 primary against the start: $+0.0896$
$[+0.0180,+0.1837]$ / $-0.0460$ $[-0.0553,-0.0367]$; against the direct primary
$+0.0899$ $[+0.0134,+0.1765]$ / $-0.0272$ $[-0.0424,-0.0129]$; against
GPTQ($1\%$) $-0.0714$ $[-0.1685,+0.0314]$ / $-0.0562$ $[-0.0728,-0.0399]$.  The
direct primary against the start on these fresh units: $-0.0003$
$[-0.0812,+0.1073]$ / $-0.0187$ $[-0.0364,-0.0003]$.  Pricing cost $15{,}745$ s
for three rounds.  Figure~\ref{fig:reversal_all} shows the price drift for both
consecutive pairs of centers; the tidy table behind it is
\texttt{figures/data/recentered\_price\_drift.csv} and the script
\texttt{scripts/fig\_price\_reversal.py}, which re-derives the four sign-flip
counts and rank correlations from the frozen price journals before drawing.

\begin{figure}[h]
\centering
\includegraphics[width=0.85\textwidth]{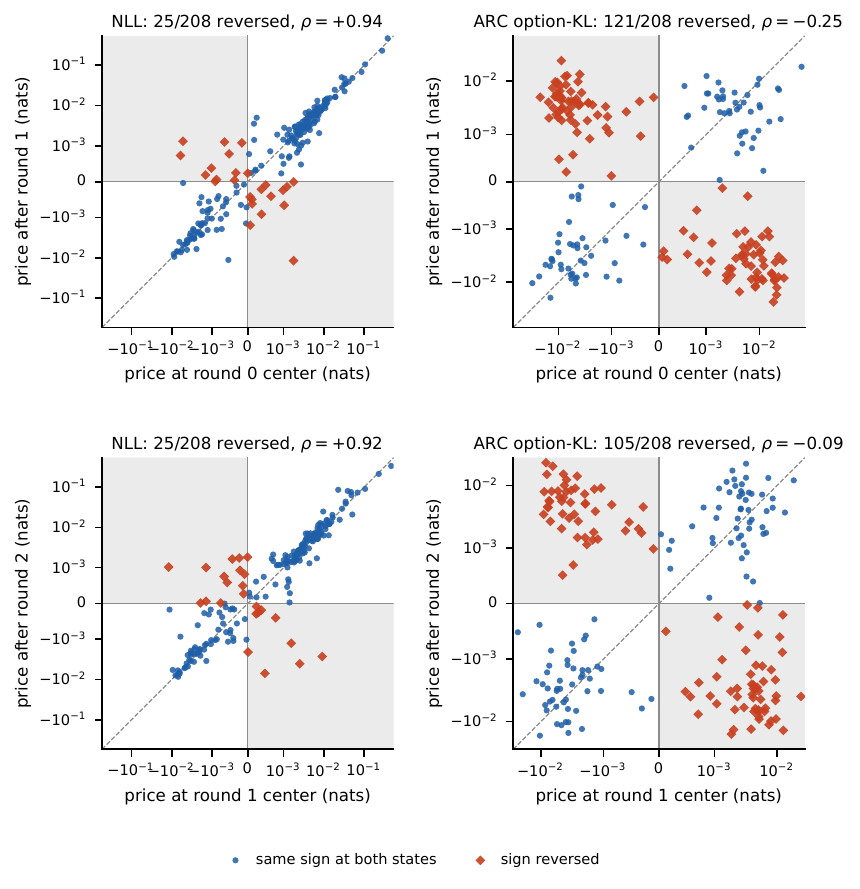}
\caption{Own-midpoint prices of the 208 comparable moves at consecutive centers
of the recentered selection (top: start to round 1; bottom: round 1 to round
2).  Symmetric-log axes; shaded quadrants are sign reversals.  NLL: 25 and 25
reversals, Spearman $0.94$ and $0.92$; ARC option-KL: 121 and 105 reversals,
Spearman $-0.25$ and $-0.09$.}
\label{fig:reversal_all}
\end{figure}

\paragraph{Scale gate.}
A pre-registered rule capped FIT pricing at $24$ h.  Llama-3.1-8B fits the GPU
($76.7$ GB peak, producers $25$ min) but its 448 replacements project to $40.2$ h
($26.2$ midpoint, $14.0$ exact singleton), so the scale runs used Llama-3.2-3B
(28 layers, 196 projections, 392 replacements).

\paragraph{Llama-3.2-3B without activation ordering: a configuration failure.}
The first 3B run used the 1B configuration (no activation ordering).  Its
GPTQ($1\%$) model is far worse than RTN (TEST NLL: RTN $-3.586$
$[-4.593,-2.519]$ relative to it; ARC $-0.173$); the reference GPTQ
implementation run with the same configuration reproduces it exactly (zero
decoded-weight difference on all 196 projections, FIT NLL $7.053878$ in both),
so the failure belongs to the configuration on this model, not to our kernel.
The run is kept as a record, not as a scale result.  FIT predictors against
exact single-replacement endpoints: midpoint $392/392$ NLL (Spearman $0.9996$)
and $390/392$ ARC ($0.9993$); current state $318/392$ ($0.835$) and $365/392$
($0.957$).  Selections: midpoint 130 batches, exact singleton 130 (4 differ),
current 137 (11 and 13 differ).  TEST against GPTQ($1\%$): midpoint $-3.712$
$[-4.719,-2.644]$ / $-0.041$ $[-0.127,+0.046]$, exact singleton $-3.712$ /
$-0.052$, current $-3.693$ / $-0.002$; midpoint against current $-0.0186$
$[-0.0219,-0.0151]$ / $-0.0391$ $[-0.0592,-0.0194]$, against RTN $-0.1255$
$[-0.1344,-0.1166]$ / $+0.1322$ $[+0.0910,+0.1738]$ (28 NLL blocks, 297 ARC
items per partition).

\paragraph{Llama-3.2-3B with activation ordering.}
GPTQ with \texttt{act\_order} and static groups (groups defined on the original
column order so that every state stays a contiguous g128 INT3 state; block 128,
sequential, $1\%$ and $0.1\%$ damping), matching the reference implementation
to zero decoded-weight difference.  Whole models on the calibration units (NLL /
ARC option-KL): RTN $2.5130$ / $0.3948$; GPTQ($1\%$) without act-order $7.0539$
/ $0.5424$; act-order GPTQ($1\%$) $2.2808$ / $0.3695$; act-order GPTQ($0.1\%$)
$2.2607$ / $0.3915$.  Act-order GPTQ($1\%$) against RTN: NLL $-0.2322$
$[-0.2712,-0.1700]$, ARC $-0.0253$ $[-0.0542,+0.0034]$.  Bank: 196 projections,
each keeping the act-order GPTQ($1\%$) codes or taking RTN or the act-order
$0.1\%$ codes, 392 replacements priced by the current-state row, the own
midpoint and the exact singleton endpoint on the FIT units.  Exact endpoints:
269/392 replacements raise NLL and 194/392 raise ARC option-KL (by source: RTN
152 and 100 of 196, $0.1\%$ 117 and 94); the two directions conflict on
$183/392$ (NLL improves and ARC worsens 54, NLL worsens and ARC improves 129,
both improve 69, both worsen 140), with Pearson $0.20$ and Spearman $0.03$
between the two coordinates.  Predictors: midpoint $390/392$ NLL (Spearman
$0.9996$, MAE $0.00037$, largest error $0.032$) and $391/392$ ARC ($0.9993$,
$0.00035$); current state $153/392$ ($0.292$, $0.0044$, largest error $0.59$)
and $357/392$ ($0.931$, $0.0040$); both directions jointly $389/392$ against
$141$, conflict classified on $389/392$ against $164$; by source the midpoint
is $195/196$ and $196/196$ (RTN) and $195/196$ on both ($0.1\%$).  The frozen
one-shot selectors (minimize predicted ARC subject to predicted NLL $\le0$)
change 149 (current), 132 (midpoint) and 134 (exact singleton) batches; the
midpoint and exact-singleton selections differ at 2 of 196 batches, the
current selection at 24 and 22, and summed over exact singleton endpoints the
current selection violates its NLL constraint ($+0.210$) before any composition
residual.  No composed state was materialized; these are selector-level
results.  Cost: current row $120$ s; 392 midpoints $23{,}279$ s ($59.4$ s each);
392 exact endpoints $15{,}497$ s ($39.5$ s each), ratio $1.50$.

\paragraph{Llama-3.2-1B with activation ordering.}
The same producer on Llama-3.2-1B gives a bank of 224 replacements of the
act-order GPTQ($1\%$) codes by RTN or the act-order $0.1\%$ codes.  Exact
endpoints on the FIT units: $184/224$ raise NLL, $153/224$ raise ARC
option-KL, and the two directions conflict on $81/224$; midpoint signs
$223/224$ (NLL) and $222/224$ (ARC), current-state $50/224$ and $179/224$.
Constructors with the rules, splits and $352$-forward budget of the
no-act-order bank: the one-shot selectors change 81 (current), 37 (midpoint)
and 35 (exact singleton) projections with predicted FIT deltas
$(-0.460,-0.359)$, $(-0.00005,-0.093)$ and $(-0.0002,-0.088)$; their
VALIDATION deltas against the base are $(+0.424,+0.016)$, $(+0.021,-0.001)$
and $(+0.013,+0.014)$, TEST $(+0.339,+0.036)$, $(+0.033,-0.0005)$ and
$(+0.028,+0.016)$, the midpoint state's TEST NLL increase resolved
($[+0.006,+0.062]$) and its ARC change not ($[-0.016,+0.014]$).  The width-4
beam exhausted its budget after one expansion with four two-change FIT Pareto
states; the frozen validation rule selected $\{0,50\}\to\mathrm{RTN}$ (layer-0
q-projection, layer-7 k-projection).  Against the base: VALIDATION $-0.0004$
$[-0.0032,+0.0014]$ / $-0.0066$ $[-0.0115,-0.0017]$; TEST $-0.0014$
$[-0.0043,+0.0007]$ / $-0.0032$ $[-0.0081,+0.0013]$.  Against the one-shot
states on TEST: midpoint $-0.0345$ $[-0.0624,-0.0082]$ / $-0.0027$
$[-0.0167,+0.0115]$; exact singleton $-0.0295$ $[-0.0541,-0.0081]$ /
$-0.0193$ $[-0.0360,-0.0024]$; current $-0.341$ $[-0.429,-0.186]$ / $-0.0396$
$[-0.0586,-0.0203]$.  As on the no-act-order bank, the control was frozen and
run after the shared one-shot VALIDATION/TEST had been read.

\paragraph{1.7B CommonsenseQA and WinoGrande (abstentions).}
CommonsenseQA: 25 primitives, $11.0$M physically changed proxy-improving
groups; four stable fit candidates, three passing response-based outer
confirmation; own-midpoint rescoring reversed the target on all three; FINAL
unread.  The full same-unit drift audit is Table~\ref{tab:driftfull}.

\begin{table}[h]
\centering
\caption{All nine same-unit paired drift cells on the 1.7B bank
(Section~\ref{sec:drift}).  Unit-level: support-1 drift positive on $32/32$
blocks and $100/100$ target microbatches; support-2 on $32/32$ and $98/100$;
support-4 negative on $32/32$ D0 and $29/32$ D1 blocks, positive on $100/100$
target microbatches.}
\label{tab:driftfull}
\vspace{0.4em}
\footnotesize
\setlength{\tabcolsep}{4pt}
\begin{tabular}{@{}llrrrl@{}}
\toprule
Candidate & Functional & Proposal center & Own center & Drift & Simultaneous CI \\
\midrule
support-1 & D0 NLL & $-0.00209$ & $+0.00249$ & $+0.00457$ & $[+0.00359,+0.00556]$ \\
support-1 & D1 NLL & $-0.00179$ & $+0.00311$ & $+0.00491$ & $[+0.00365,+0.00616]$ \\
support-1 & target KL & $-0.00219$ & $+0.00376$ & $+0.00595$ & $[+0.00548,+0.00642]$ \\
support-2 & D0 NLL & $-0.00186$ & $+0.00297$ & $+0.00482$ & $[+0.00362,+0.00603]$ \\
support-2 & D1 NLL & $-0.00134$ & $+0.00376$ & $+0.00509$ & $[+0.00360,+0.00658]$ \\
support-2 & target KL & $-0.00262$ & $+0.00262$ & $+0.00524$ & $[+0.00462,+0.00586]$ \\
support-4 & D0 NLL & $-0.05442$ & $-0.35554$ & $-0.30112$ & $[-0.37392,-0.22832]$ \\
support-4 & D1 NLL & $-0.04372$ & $-0.26543$ & $-0.22171$ & $[-0.31977,-0.12366]$ \\
support-4 & target KL & $-0.00355$ & $+0.06962$ & $+0.07317$ & $[+0.06600,+0.08033]$ \\
\bottomrule
\end{tabular}
\end{table}  WinoGrande: original protocol returned
no certified state; on the same outer units, candidates $\{17,23\}$ and
$\{17,22,23\}$ had target fit means $-0.032$ and $-0.051$ and outer full-center
means $+0.025$ and $+0.014$ (fit-to-outer sample-transport differences $+0.057$
and $+0.065$, Welch intervals $[-0.011,+0.125]$ and $[-0.003,+0.132]$), and
outer medoid-center means $+0.139$ and $+0.108$ (full-to-medoid center-transport
differences $+0.114$ and $+0.094$, paired intervals $[+0.092,+0.137]$ and
$[+0.073,+0.115]$); FINAL unread.

\section{Response Structure Across Domains}
\label{app:structure}

A $12\times52$ response matrix over six domains under NLL and full-vocabulary
teacher-KL is algebraically full rank with all twelve singular modes above a
block-bootstrap noise floor; participation rank $2.71$ with $75.9\%$ of energy
in two modes, $1.79$ (NLL) and $1.69$ (KL) within families; within-domain
NLL--KL correlation $0.15$--$0.63$.  Held-out completion under a frozen four-fold
split: nonnegative matched-family cone $78.5\%$ sign accuracy, Spearman $0.744$;
source identity $66.0\%$; ten-anchor unconstrained fit $72.3\%$ against $76.9\%$
for the strongest single anchor.  On eight moves frozen by generic quadrant
before any ARC-Easy output was read, KL-family signatures called $7/8$
gold-margin directions, the NLL family $3/8$.

\subsection{Teacher-KL is conditional on the teacher}
\label{app:teacher}

Stratifying option-KL effects by whether the teacher answers correctly inverts
the sign on three of the measured ARC-Easy moves between teacher-correct and
teacher-wrong strata.  Improving KL to a teacher transports its decisions,
including the wrong ones.

\section{Compute}
\label{app:compute}

Timing on one RTX PRO 6000 (Qwen3-0.6B, 28 D2 blocks, 23-primitive $W_C$ bank),
excluding the shared $8.1$ s model load: response setup $0.41$ s; 28 response
forward+backward $10.13$ s ($0.362$ s per block, 23 coordinates each); baseline
plus 23 primitive forwards $25.4$ s; sparse exact patch construction $93.9$ s,
one-time patch transfer $2.3$ s, 23 apply+revert $0.09$ s; CPU subset search over
$880{,}969$ subsets $51.4$ s.  Search-side totals: response $61.9$ s, sparse
exact $173.2$ s ($2.80\times$), full-artifact exact $279.5$ s.  Break-even bank
size $9.6$ primitives.  Confirming B, E14 and the pair on exact endpoints:
$3.19$ s.  Peak allocated memory: response $21.3$ GB, sparse endpoint $10.3$ GB,
single primitive endpoint $4.9$ GB.  Lattice measurements: D0 $26.7$ GB, ARC
$13.0$ GB, HellaSwag $18.5$ GB; 512 distinct exact masks and 84
functional-center sweeps across both lattices.  Cost of one own-midpoint
backward in exact-endpoint forwards, measured on the constructor banks: median
$3.63$ on the Qwen3-4B re-selection bank (Appendix~\ref{app:protocols}),
$1.57$ on the Llama-3.2-1B bank ($5{,}274$ s against $3{,}352$ s for the same 224
replacements) and $1.50$ on the act-order Llama-3.2-3B bank ($59.4$ s against
$39.5$ s per replacement), the range quoted in Section~\ref{sec:constructors}.

\end{document}